\newtheorem{definition}{Definition}
\journal{Springer Nature}
\begin{document}

\begin{frontmatter}

\title{On minimum phase transformation and filter design}

\author{J.C. Olivier and E. Barnard}
\address{J.C. Olivier is with the School of Engineering, University of Tasmania, Sandy Bay Road, Hobart, Australia. jc.olivier@utas.edu.au\\
E. Barnard is with Multilingual Speech Technologies (MuST), North-West University, South Africa.  }




\begin{abstract}
Minimum-phase finite impulse response filters are widely used in practice, and much research has been devoted to the design of such filters. However, for the important case of Chebyshev filters there is a curious mismatch between current best practice and well-established theoretical principles. The paper shows that this difference can be understood through analysis of the time-domain factorization of a suitable extended matrix. This analysis explains why the definition of  a \emph{factorable} linear phase filter must be revised.  The time domain analysis of factorization suggests initial values leading to fast and accurate convergence of iterative algorithms for the design of minimum-phase finite impulse response filters. Numerical results are provided to demonstrate that a significant improvement in filter tap accuracy is obtained when the well-established theoretical principles are correctly applied to the design of minimum phase finite impulse response filters.
\end{abstract}

\begin{keyword}
Minimum phase transformations, FIR filter design, factorization
\end{keyword}

\end{frontmatter}

\section*{Statement and Declarations}

\begin{itemize}
    \item The manuscript has no associated data. 
    \item There are no financial or non-financial interests that are directly or indirectly related to the work submitted for publication.
\end{itemize}

\section{Introduction}

Linear filter theory plays an important role in systems analysis, feedback control,  signal processing, music, acoustics, communications theory, and is well understood \cite{oppenheim,oppenheim2}.  A  linear filter that is widely applied in practice  is the discrete-time finite impulse response (FIR) filter \cite{smith,proakis2}. 

Depending on the intended application of the FIR filter, a linear phase filter is often required, as linear phase guarantees that the system will not introduce distortion \cite{linear_syst}.  The design of linear phase FIR filters is mature and software is available for the efficient design of such filters.    For the specific case of a linear phase FIR filter with a Chebyshev spectral approximation,  an optimal design  was proposed in \cite{parks}\footnote{The so-called Parks-McClellan design is available in MATLAB as the function \emph{firpm}.}.  For the same filter order, the stopband attenuation achieved by a Chebyshev approximation exceeds that of a Butterworth approximation, and Chebyshev filters can achieve a sharper transition between the passband and the stopband.  

 However under certain conditions a  minimum phase response may offer advantages to the system designer:  
 \begin{enumerate}
 \item Minimum phase FIR filters  are unconditionally stable when feedback is applied.
 \item  They have an optimal step response, which is desirable in feedback control systems.  
 \item They are robust when the FIR tap coefficients are discretized. 
 \item They  contain the least number of taps able to achieve a specified filter magnitude response.  
\end{enumerate}

 Several design methodologies for the design of a Chebyshev minimum phase filter have been proposed.   The literature can broadly be  categorized as promoting minimum phase FIR filter design based on the Hilbert transform \cite{evans}, explicit enumeration of the polynomial roots (root finding)  \cite{kamp,smith}, the complex cepstrum \cite{reddy} and spectral factorization \cite{herrmann}.   The Hilbert transform and the cepstrum are both based on the discrete Fourier transform (DFT)\footnote{Implemented through the fast Fourier transform (FFT). } and require very long FFT's to obtain good FIR filter performance.   As the root finding algorithm explicitly computes all the zeros on the $Z$ domain,  this approach is effective for low order systems,  but for high order systems becomes numerically unstable \cite{smith}.  

 Recently new results were published \cite{antonio}, demonstrating that Chebyshev filter design based on spectral ($Z$ domain) factorization \cite{herrmann,o-w} is computationally efficient, and yields numerical results superior to any other method currently available  in the literature.  
 
 This paper proposes new results for  the design of Chebyshev minimum phase filters, based on factorization of a linear phase filter $\mathbf g$. Orchard and Wilson showed \cite{o-w} that the taps of a  minimum phase filter must solve a certain system of non-linear equations.  The $\mathrm L_2$ norm of the residual error vector (for the system of non-linear equations) has been proposed as a metric to measure the quality of a minimum phase filter design \cite{antonio}.  This paper demonstrates  that an \emph{optimal} Chebyshev minimum phase filter has, at least theoretically, a residual error  norm that is zero.  It will be shown that such an optimal minimum phase filter requires the Gramian matrix representing the linear phase filter $\mathbf g$ to be positive definite. 
 
 A detailed analysis will be presented to show that the residual error for a design based on lifting  \cite{herrmann} is finite  --- the transfer functions of such filters are (theoretically) positive semi-definite and sub-optimal, since complex filter taps are required to solve the Orchard-Wilson equations exactly.  The paper demonstrates that real minimum phase filter taps and zero residual error are possible, if and only if, the Gramian matrix representing $\mathbf g$ is positive definite. On a digital computer with a finite resolution the norm of the residual error is of course finite, but limited only by the resolution of the machine; that is, a computer with infinite precision would produce a residual error norm of zero.  This paper demonstrates that the residual error norm based on a $64$ bit MATLAB implementation is orders of magnitude smaller than that reported in \cite{antonio}, provided the filter $\mathbf g$ is positive definite.  
 
  Results are presented to study the effect of lifting the spectral domain response as proposed in \cite{herrmann}.  Denoting the lifting factor as $\gamma$,  the paper demonstrates that the residual error norm as a function of $\gamma$ exhibits a \emph{waterfall point}, beyond which the error norm \emph{falls away to zero} (at least theoretically).  The waterfall point coincides with the value of  $\gamma$ that renders the linear phase filter $\mathbf g$ positive definite, and this result is  proved based on factorization in the time domain. It is proved that regularization of a Gramian matrix yields the correctly adjusted linear phase filter $\mathbf g$, and that factorization follows only if the Gramian is positive definite.   It is shown that the magnitude of the smallest eigenvalue of the Gramian matrix is identical to the lifting value proposed in \cite{herrmann}\footnote{Note that the lifting value must be based on a measurement of the \emph{realized} filter ripple, not the design values.}. Section \ref{min_phase} will provide numerical  results based on complex analysis (obtained with Mathematica) to demonstrate these results. 
 
 The paper also considers a second application, namely the transformation of a given arbitrary phase FIR to a minimum phase FIR filter with an identical spectral magnitude.  There are several applications in practice that will require such a transformation. One such case is where a medium or channel is  characterized by an impulse response \cite{linear_syst} containing a random phase \cite{smith,proakis1}.  Another possibility is that only the magnitude of the frequency response of a propagation medium is known, but a minimum phase response is required for computational reasons \cite{optics}.   Under these conditions the transformation of a given FIR to a minimum phase FIR is required. 
 
An approach often deployed to perform this transformation is based on estimating the coefficients of the minimum phase filter, and of all the estimators available the minimum mean square error (MMSE) estimator is most often deployed \cite{cioffi}.  This paper presents  numerical results to demonstrate that factorization provides an efficient solution for this transformation, and yields results that significantly outperform the MMSE method.

The paper is structured as follows.  Section \ref{min_phase} reviews the theory of minimum phase FIR design through factorization on the spectral (Z) domain. This section also serves to make the paper somewhat self contained, and presents a detailed and critical analysis of the requirements for factorization.   In Section \ref{time_domain} factorization in the discrete-time domain is analyzed, and it is shown that the smallest (and also negative) eigenvalue of the Gramian matrix is equivalent to the peak negative value of the  amplitude frequency response of $\mathbf g$. This section demonstrates that the Gramian matrix representing the linear phase filter $\mathbf g$ must be positive definite to make factorization possible.    Section \ref{FIR_design} presents the proposed optimal design of a positive definite Chebyshev minimum phase FIR filter, and demonstrates that numerical results significantly outperform a design based on a positive semi-definite linear phase filter, as well as best practice design available in the literature  \cite{antonio}. Section \ref{transform} presents numerical results for the transformation of a given arbitrary phase FIR to a minimum phase FIR. The numerical results are compared to results obtained through MMSE design.  The  paper is concluded in Section \ref{conclude}. 

\section{A critical review of factorization on the spectral domain} \label{min_phase}


\subsection{FIR filters, minimum phase and frequency response}

Denote the finite impulse response (FIR) of a linear system as a column vector $\mathbf h = \{ h[0],h[1],\cdots,h[M-1] \}^{\mathrm T}$, with $M$ elements or taps. The symbol $^{\mathrm T}$ denotes the transpose operation.  To compute (measure) the FIR,  the input of the  system is set to the Kronecker delta $\delta[n]$, and the computed (measured) output is by definition the FIR $\mathbf h$.  Here $n$ indicates discrete time, and a causal system has $h[n] = 0 ~\forall ~ n < 0 $.  The convolution theorem for causal systems states that \cite{oppenheim} 
\begin{equation}
y[n] = \sum_{k=0}^{M-1} h[k] x[n-k]
\end{equation}
where $x[n]$ denotes the system input at time $n$, and $y[n]$ denotes the system output at time $n$.  

The discrete-time FIR $\mathbf h$ is  related to the transfer function denoted $H(z)$, based on the $Z$ transformation of $\mathbf h$ \cite{linear_syst}.  The unilateral $Z$ transform is applicable to causal systems, and  transforms the discrete-time FIR $\mathbf h$ to a complex spectral-domain ($Z$ domain) representation,  given by  
\begin{equation}
H(z) = Z \{ h[n] \}  = \sum_{k=0}^{\infty} h[n] z^{-n}.
\end{equation}

The power spectral magnitude of the FIR filter  is defined as $|H(\Omega)|^2$  where 
\begin{equation}
H(\Omega) = H(z = e^{i \Omega}).
\end{equation}
$H(\Omega)$ is referred to as the frequency response, with $-{\pi} \le \Omega \le {\pi}$ the normalized frequency.       When the transfer function is rational and given by 
\begin{equation}
H(z) = \frac{P(z)}{D(z)},
\end{equation}
then in general there are $N$ \emph{poles}, defined as the set of all samples $z_j$ where $D(z_j) = 0, ~\forall~ j \in \{ 1,2,3,\cdots,N \}$.  There are also $M$ \emph{zeros}, defined as the set of all samples $z_j$ where $P(z_j) = 0, ~\forall~ j \in \{ 1,2,3,\cdots,M \}$.  It can be shown \cite{linear_syst} that a stable linear system characterized by a FIR has all its poles at $z=0$, and has in general a finite number of zeros in the complex $Z$ domain (related to the order of $\mathbf h$, that is, the number of taps in $\mathbf h$).   

It is possible to transform $\mathbf h$ so that only the phase of the complex frequency response $H(\Omega)$ is modified.  For such a transformed system denoted as $C(z)$, the magnitude of the transformed frequency response is    $|C(\Omega)| = |H(\Omega)|$.  Let there be $S$ possible transforms that will satisfy  the requirement $|C(\Omega)| = |H(\Omega)|$, then there are $S$  FIR vectors  $\mathbf c_{q \in \{1,\cdots,S\}}$ with identical magnitude spectra.  However there is a unique minimum phase FIR, denoted $\mathbf c_p$, which has the fastest  decay in the time domain --- in the sense that for any $0 \le k \le M-1$  \cite{smith}
\begin{equation}
\sum_{n=0}^{k} \, |{c_p[n]}|^2  >  \sum_{n=0}^{k} \, |{c_q[n]}|^2 ~\forall~ q \neq p.
\end{equation}

Viewed on the $Z$ domain, $C_p(z)$  follows through the $Z$ transformation of $\mathbf c_p$, and $C_p(z)$ does not have poles or zeros outside the unit circle.

\subsection{Factorization of a linear phase filter on the $Z$ domain}

\subsubsection{A  factorable linear phase filter $\mathbf g$}

\begin{figure} []
\centering
  \includegraphics[width=0.7\textwidth]{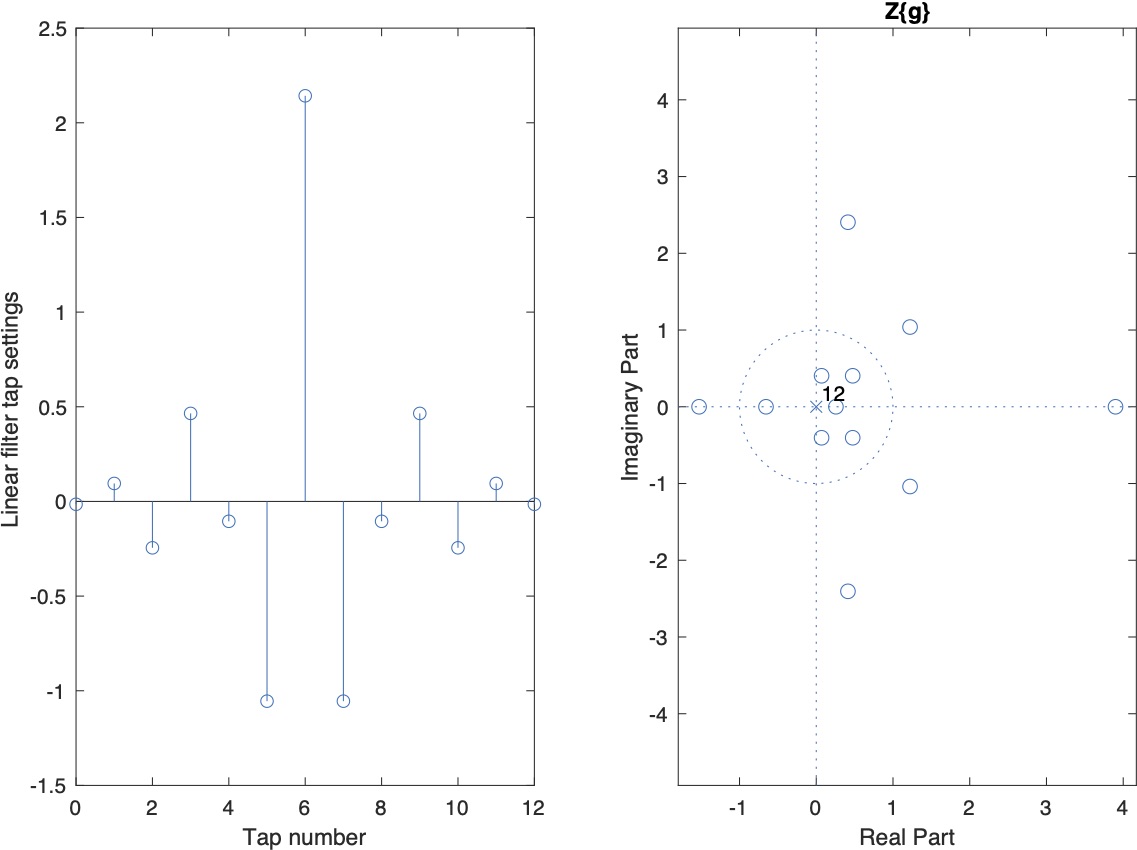}
\caption{The time and $Z$ domain representation of the linear phase filter $G(z)$. }
\label{figure1}    
\end{figure}
Consider a linear phase filter $\mathbf g$ with $13$ taps, as shown in Figure \ref{figure1}.  All the zeros on the $Z$ domain occur in pairs, such that for any zero $z_i$ on the $Z$ domain, there is also a zero $\frac{1}{z_i}$ present. Note there are no zeros located on the unit circle in this case.  

 Here all zeros occur as inverse pairs,  and the literature denotes $G(z)$ as  \emph{factorable} if 
  \begin{equation} 
G(z) =  H(z) \, \hat{H}(z),
\end{equation}
where $H(z)$ represents the $Z$ domain representation of a filter $\mathbf h$ (with zeros $z_i$), and $\hat{H}(z)$ represents the $Z$ domain representation of the time reflected filter $\mathbf{h}_{\mathrm{rfl}} $ (with zeros $\frac{1}{z_i}$) \cite{linear_syst}.

There are several ways to assign zeros to the filters  $H(z)$ and  $\hat{H}(z)$, but if all the zeros inside the unit circle are assigned to  $C(z)$, then all the zeros outside  the unit circle  are assigned to  $\hat{C}(z)$.  In this case it follows that 
 \begin{equation}  \label{ow1}
G(z) =  C(z) \, \hat{C}(z) 
\end{equation}
and  the discrete-time filter $\mathbf c$ is minimum phase.  The filter $\mathbf c$ is depicted in Figure \ref{figure2} along with its $Z$ domain representation $C(z)$, where all the zeros are inside the unit circle. 

\begin{figure} [b]
\centering
  \includegraphics[width=1\textwidth]{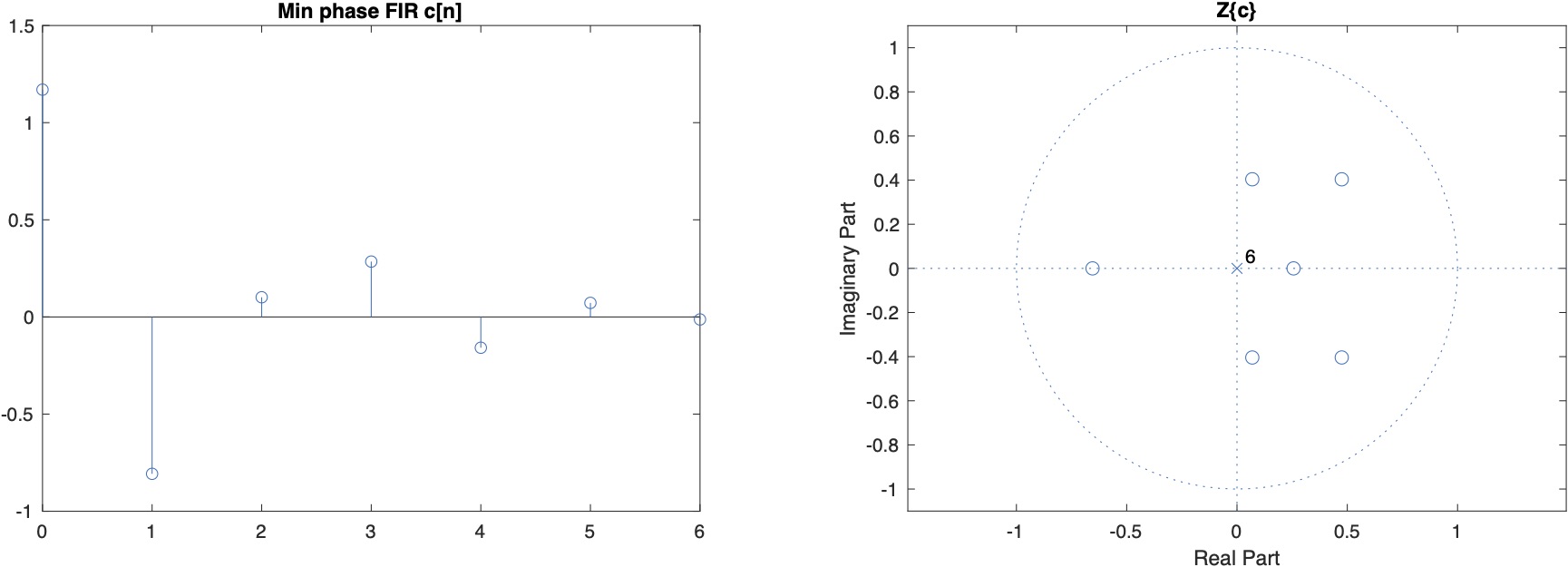}
\caption{The  minimum phase filter $\mathbf c$ in the time and $Z$ domain.}
\label{figure2}    
\end{figure}
Based on the $Z$ transform $C(z)$ can be written as 
\begin{equation}  \label{ow2}
C(z) = c[0] + c[1] z^{-1} + \cdots + c[M-1] z^{-(M-1)}.
\end{equation}
Hence the maximum phase filter is given by 
\begin{equation}  \label{ow3}
\hat{C}(z) = c[M-1] + c[M-2] z^{-1} + \cdots + c[0] z^{-(M-1)}.
\end{equation}
Combining (\ref{ow1}), (\ref{ow2}) and (\ref{ow3})
yields a system of non-linear equations to be solved to find the coefficients $\mathbf c$, given by \cite{o-w}
\begin{eqnarray} \label{nonlinear_system2}
\nonumber c_0^2 + c_1^2 + \cdots + c_{M-1}^2 &=& g\left [{M-1}\right ] \\ 
\nonumber c_0 \,c_1 + c_1 \, c_2 + \cdots + c_{M-2} \, c_{M-1} &=&  g\left [{M-2}\right ] \\
c_0 \, c_2 + c_1 \, c_3 + \cdots + c_{M-3} \, c_{M-1} &=&  g\left [{M-3}\right ] \\
\nonumber 
\vdots  &=&  \vdots \\
\nonumber c_0 \, c_{M-1} &=& g_{\mathrm{}}[{0}].
\end{eqnarray}
Hence it is evident that any linear phase filter $\mathbf g$ that is factorable will yield a minimum phase filter $\mathbf c$ through the solution of the Orchard and Wilson nonlinear equations (\ref{nonlinear_system2}).  The residual error vector $\mathbf e = \{ e_0,e_1,\cdots \}^T$ is defined as
\begin{eqnarray}
\nonumber c_0^2 + c_1^2 + \cdots + c_{M-1}^2 - g\left [{M-1}\right ]  &=& e_0\\ 
\nonumber c_0 \,c_1 + c_1 \, c_2 + \cdots + c_{M-2} \, c_{M-1} -  g\left [{M-2}\right ] &=& e_1 \\
c_0 \, c_2 + c_1 \, c_3 + \cdots + c_{M-3} \, c_{M-1} -  g\left [{M-3}\right ] &=& e_2\\
\nonumber 
\vdots  &=&  \vdots \\
\nonumber c_0 \, c_{M-1} -  g_{\mathrm{}}[{0}] &=& e_{M-1}.
\end{eqnarray}

The norm of the residual error is given by $\mathrm E_{L_2} = \sqrt{\mathbf e^T \mathbf e}$, and the  literature calls for the residual error to be deployed as a metric to measure the quality of a minimum phase filter design \cite{antonio}. In the next subsection, it will be shown that the residual error is more than just a metric to measure the quality of a minimum phase filter design --- in fact it plays a central role in this paper through a revised definition of what a factorable linear phase filter is.   

\subsubsection{A revised definition: a factorable linear phase filter} \label{factorable_def}

Some linear phase filters deploy zeros on the unit circle.  This makes it possible to generate an equiripple in the stopband, and is the basis of Chebyshev filter design.  An optimal linear phase Chebyshev filter design was proposed by Parks and McClellan \cite{parks}.  

However, these filters are typically not factorable. To demonstrate why this is the case, consider a linear phase Chebyshev filter with $5$ taps as indicated in Table \ref{voorbeeld}. The magnitude of the frequency response and the $Z$ domain representation of the filter $\mathbf g$ are shown in Figure \ref{g_studied}. 
  \begin{figure} []
\centering
  \includegraphics[width=0.65\textwidth]{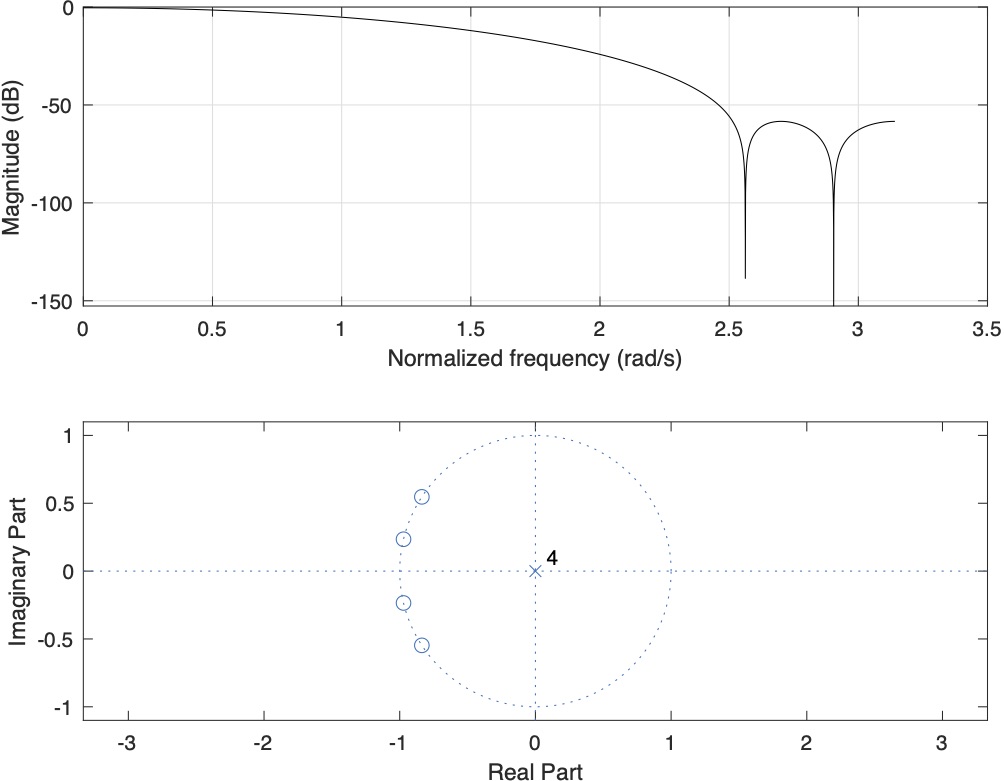}
\caption{The  linear phase filter $\mathbf g$.}
\label{g_studied}    
\end{figure}

From the $Z$ domain representation its clear that all the zeros are located on the unit circle in this case, and the zeros do not occur as inverse pairs.  Thus the factorization given by  (\ref{ow1}) is not possible in this case. This becomes clear when  the non-linear equations (\ref{nonlinear_system2}) are solved analytically, as all the solutions provided by Mathematica are complex.  Thus any solution of (\ref{nonlinear_system2}) where  $\mathbf c$ is required to have real taps, will inevitably yield a solution with $\mathrm E_{L_2} > 0$. 

An important step towards solving this problem was taken by Herrmann \emph{et al}  \cite{herrmann}, who proposed to add a constant $\gamma$ to the dominant tap of $\mathbf g$, which yields an adjusted filter $\mathbf{g_\mathrm{adj}}$, and is known as lifting. The idea is to prevent the magnitude from becoming negative at any frequency.  In this case there is only one negative stopband ripple\footnote{For positive $\Omega$.}, which is shown in Figure \ref{gap_explained} with $\gamma$ indicated.  The adjusted filter $\mathbf g_{\mathrm{adj}}$ is also shown, where clearly the magnitude is now no longer negative, and  the linear phase FIR filter $\mathbf{g_\mathrm{adj}}$ is referred to as positive.  In the next section it will be shown that the Gramian matrix corresponding to $\mathbf{g_\mathrm{adj}}$ has a smallest eigenvalue that is 0 --- in the language of linear algebra it is positive semi-definite. Thus the linear phase filter obtained through lifting as proposed in \cite{herrmann} is referred to as positive semi-definite in this paper.  

\begin{table}[]
\scriptsize
\caption{A low order linear phase filter $\mathbf g$} 
\centering 
\begin{tabular}{c c c c} 
\hline\hline 
Tap number $n$ & $\mathbf g$  \\ [0.5ex] 
\hline 
0 &  0.066075742625345 \\
1 &   0.239064282650394 \\
2 &  0.347182106755652 \\
3 &  0.239064282650394  \\ 
4 &   0.066075742625345 \\[1ex] 
\hline 
\end{tabular}
\label{voorbeeld} 
\end{table}

For the filter taps shown in Table \ref{voorbeeld} the value of $\gamma$ based on lifting \cite{herrmann} is given by 
\begin{equation}
    \gamma = {0.00120505352635236},
\end{equation}
and the zeros have migrated to form double zeros on the $Z$ plane as shown in Figure \ref{gap_explained}. The adjusted filter $\mathbf g_{\mathrm{adj}}$ now \emph{appears} to be factorable, as one of the zeros can be associated with the causal minimum phase filter $\mathbf c$, and the other zero with the anti-causal reflected version of $\mathbf c$. This was indeed the argument presented by Herrmann, and deployed by the authors in \cite{o-w,antonio}.  
 \begin{figure} []
\centering
  \includegraphics[width=0.8\textwidth]{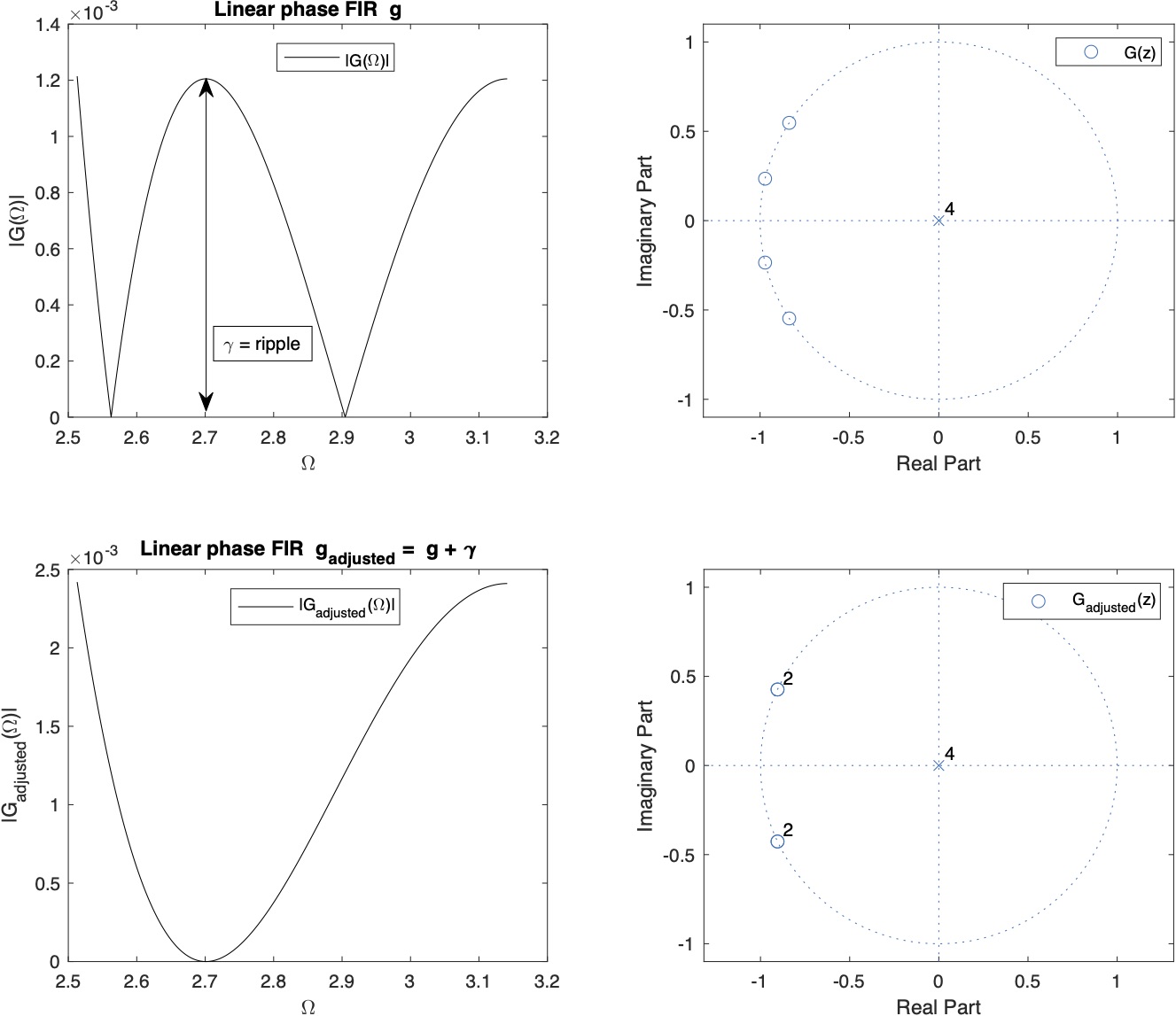}
\caption{Lifting as proposed by Herrmann \cite{herrmann} yields an adjusted filter denoted $\mathbf g_{\mathrm{adj}}$.}
\label{gap_explained}    
\end{figure}

However, consider an analytical solution in the complex plane of the nonlinear equations  (\ref{nonlinear_system2}) for $\mathbf{g_\mathrm{adj}}$. Such a solution can be obtained by making use of Mathematica, the outcome of which is shown in Figure \ref{mathematica2}.  It is clear that the  positive semi-definite linear phase filter $\mathbf{g_\mathrm{adj}}$ does not yield a real solution for the system of non-linear equations given by (\ref{nonlinear_system2}).  All the solutions provided by Mathematica are complex.  Thus any attempt to solve (\ref{nonlinear_system2}) with the taps of the minimum phase filter $\mathbf c$ real (as they should be), will naturally lead to $\mathrm E_{L_2} > 0$. Hence lifting as proposed by Herrmann \cite{herrmann} does not provide a factorable linear phase filter, if \emph{factorable} is defined by the requirement to have the Orchard and Wilson equations (\ref{nonlinear_system2}) yield a real solution with $\mathrm E_{L_2} = 0$. This observation leads to a revised   definition of \emph{factorable} that will be adopted in this paper: 

\begin{definition} \label{factorable}
A linear phase filter $\mathbf g_{\mathrm aug}$ is factorable, if and only if, $\mathrm E_{L_2} = 0$ and the factor $\mathbf c$ is real.
\end{definition}

It will now be demonstrated how lifting can be deployed to yield a factorable linear phase filter. Consider the effect of adding a  small number denoted by $\epsilon$ to $\gamma$, say 
\begin{equation}
    \gamma^\star  = \gamma + 3 \times 10^{-17}.
\end{equation}
Mathematica yields a real solution if $\gamma^\star$ is deployed, as shown in Figure \ref{mathematica1}.  It can be verified that for $\gamma^\star$  it follows that $\mathrm E_{L_2} = 0$, and thus $\gamma^\star$ provides a factorable linear phase filter. In the next section it will be shown that the linear phase filter $\mathbf g_{\mathrm{aug}}$ based on $\gamma^\star$ yields a Gramian matrix that is positive definite. 

Results based on the exact solution are of course  theoretical, and if a numerical solution is performed on a  digital computer with finite resolution,  then $\mathrm E_{L_2} > 0$.  But what should be realised is that this error is induced by finite machine resolution, and can be reduced simply by improving the computer resolution. 

Section  \ref{FIR_design} will provide numerical results to verify the definition above for a practical $25$ tap filter, and compare the results to best practice methodology available in the literature.

\begin{figure} []
\centering
  \includegraphics[width=0.5\textwidth]{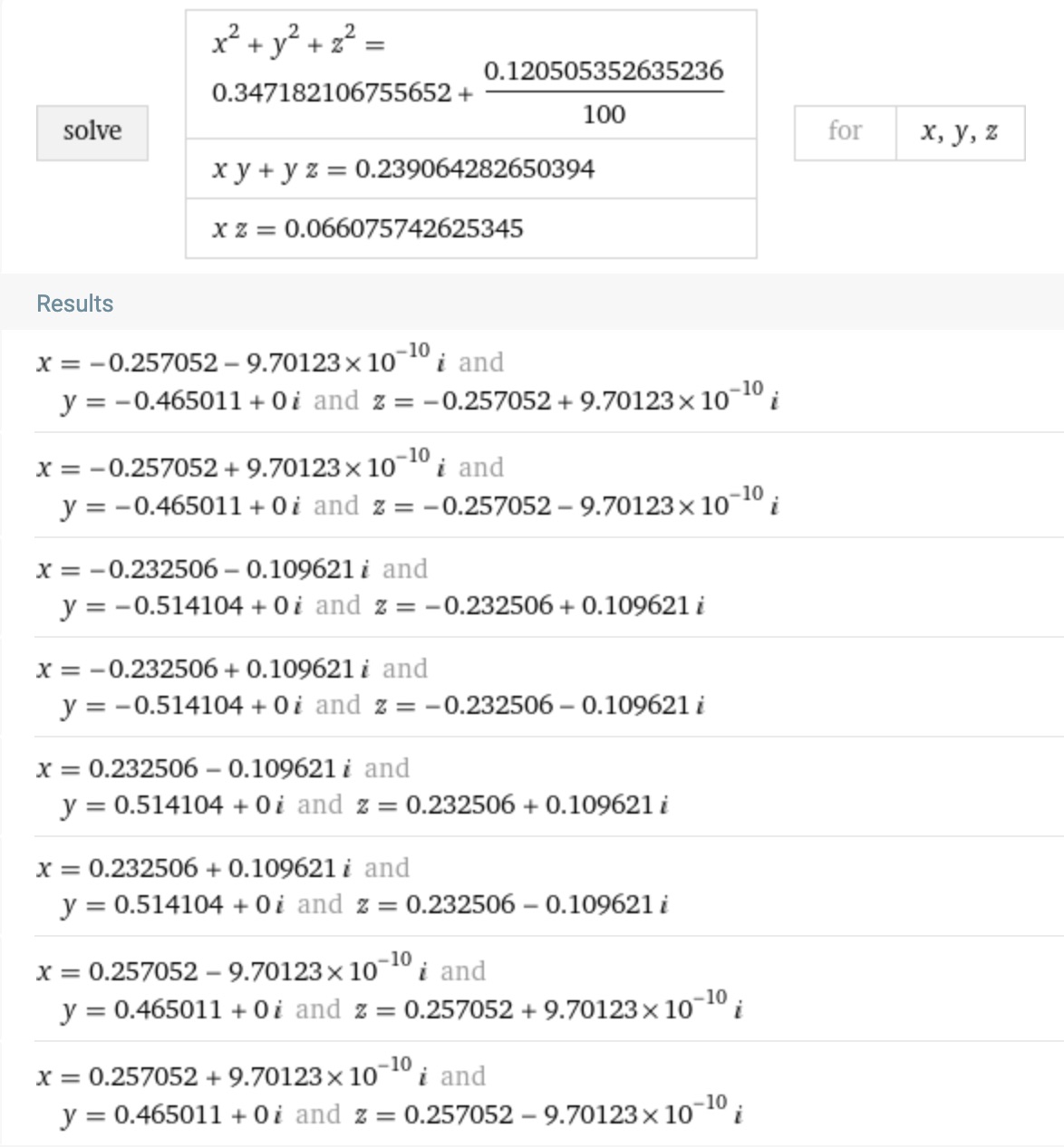}
\caption{The solution of the non-linear Orchard and Wilson equations for a positive semi-definite $\mathbf{g}_{\mathrm{adj}}$ based on Mathematica.}
\label{mathematica2}    
\end{figure}
\begin{figure} []
\centering
  \includegraphics[width=0.5\textwidth]{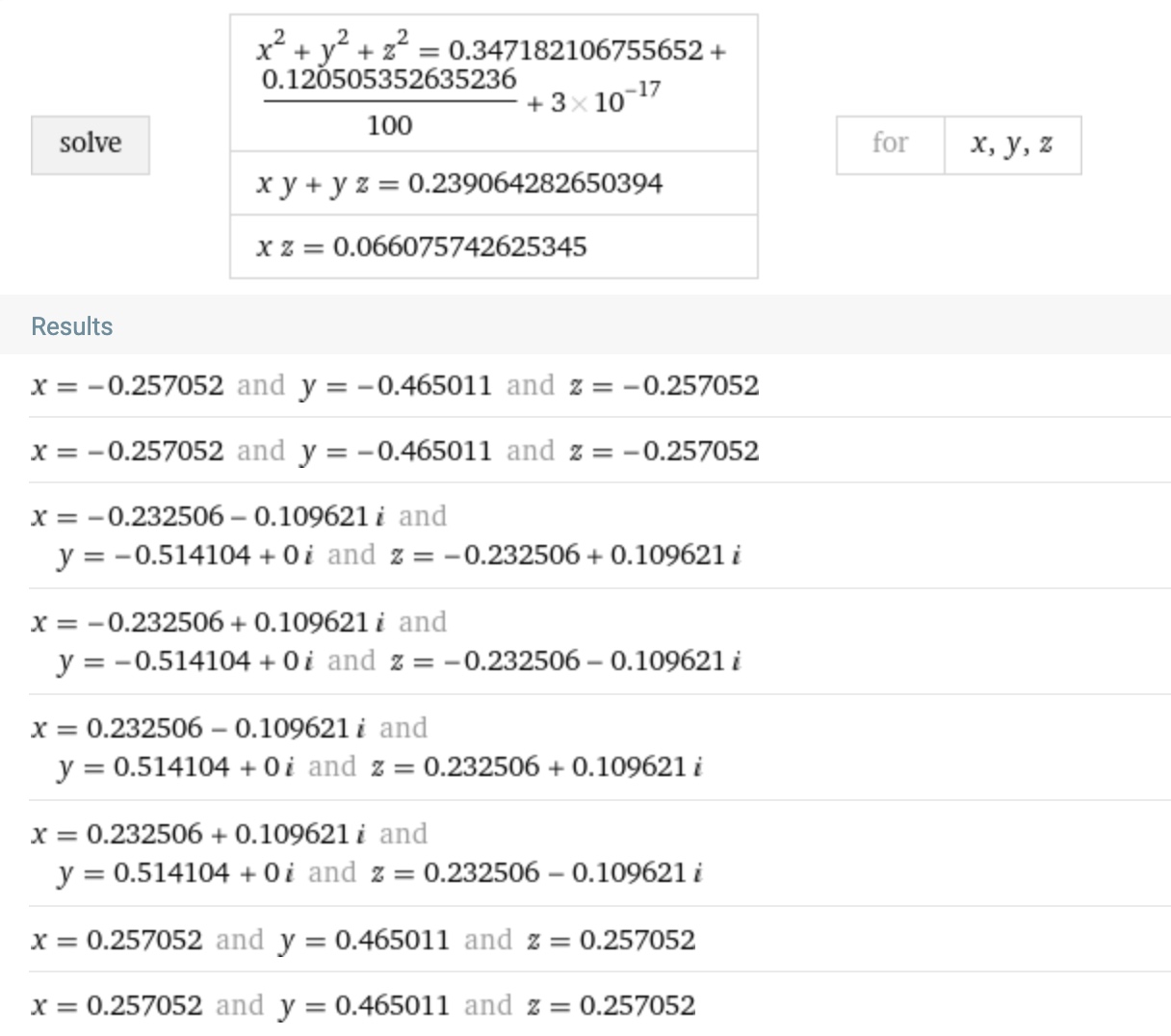}
\caption{The solution of the non-linear Orchard and Wilson equations for a positive definite $\mathbf{g}_{\mathrm{adj}}$ based on Mathematica.}
\label{mathematica1}    
\end{figure}


\section{Time domain factorization and minimum phase FIR filter design} \label{time_domain}

In this section minimum phase filter design based on time domain factorization  is presented, and it will be shown that factorization on the time domain confirms the revised definition  provided in the previous section.

\subsection{Locally Toeplitz matrices exhibiting  symmetry point equilibrium }

In anticipation that Cholesky decomposition will be required to factorize on the time domain \cite{cioffi}, consider the transpose of an upper triangular matrix  obtained through Cholesky decomposition, shown in Figure \ref{chol}. The matrix is not strictly Toeplitz, as the top rows and the bottom rows differ from the rows near the symmetry point, where it is \emph{locally Toeplitz}.   This is typical  when the matrix size is much greater than the number of diagonals that contain non-zero values. Such a matrix is exhibiting {symmetry point equilibrium}.  
\begin{figure} [h]
\centering
  \includegraphics[width=0.45\textwidth]{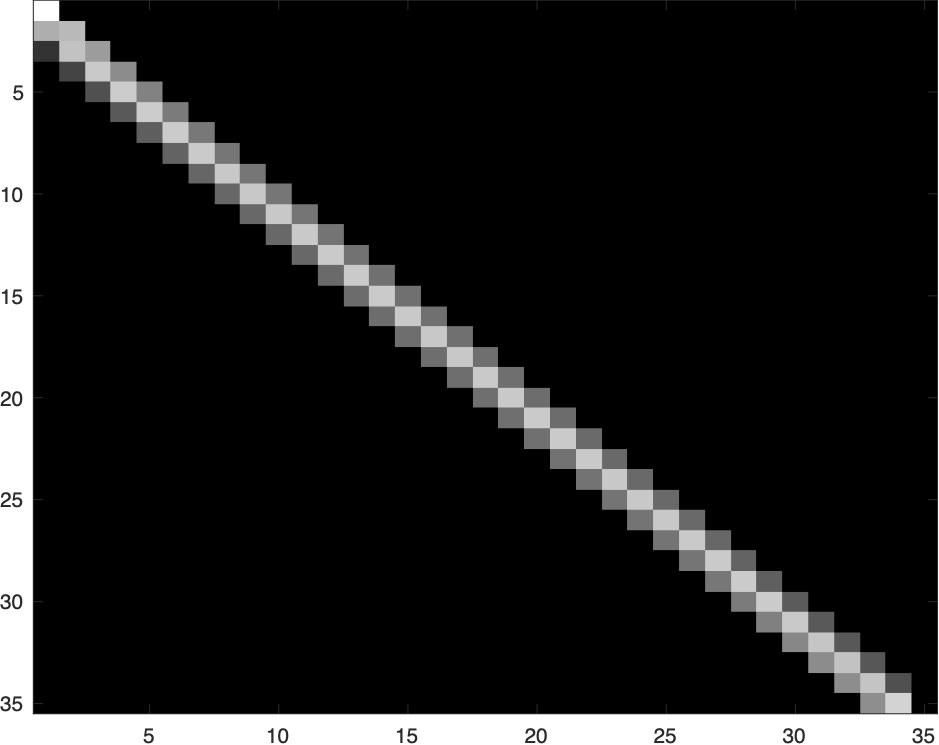}
\caption{The structure of a matrix where symmetry point equilibrium holds --- it is locally Toeplitz,  but not strictly Toeplitz.}
\label{chol}    
\end{figure}

 \begin{figure} []
\centering
  \includegraphics[width=0.8\textwidth]{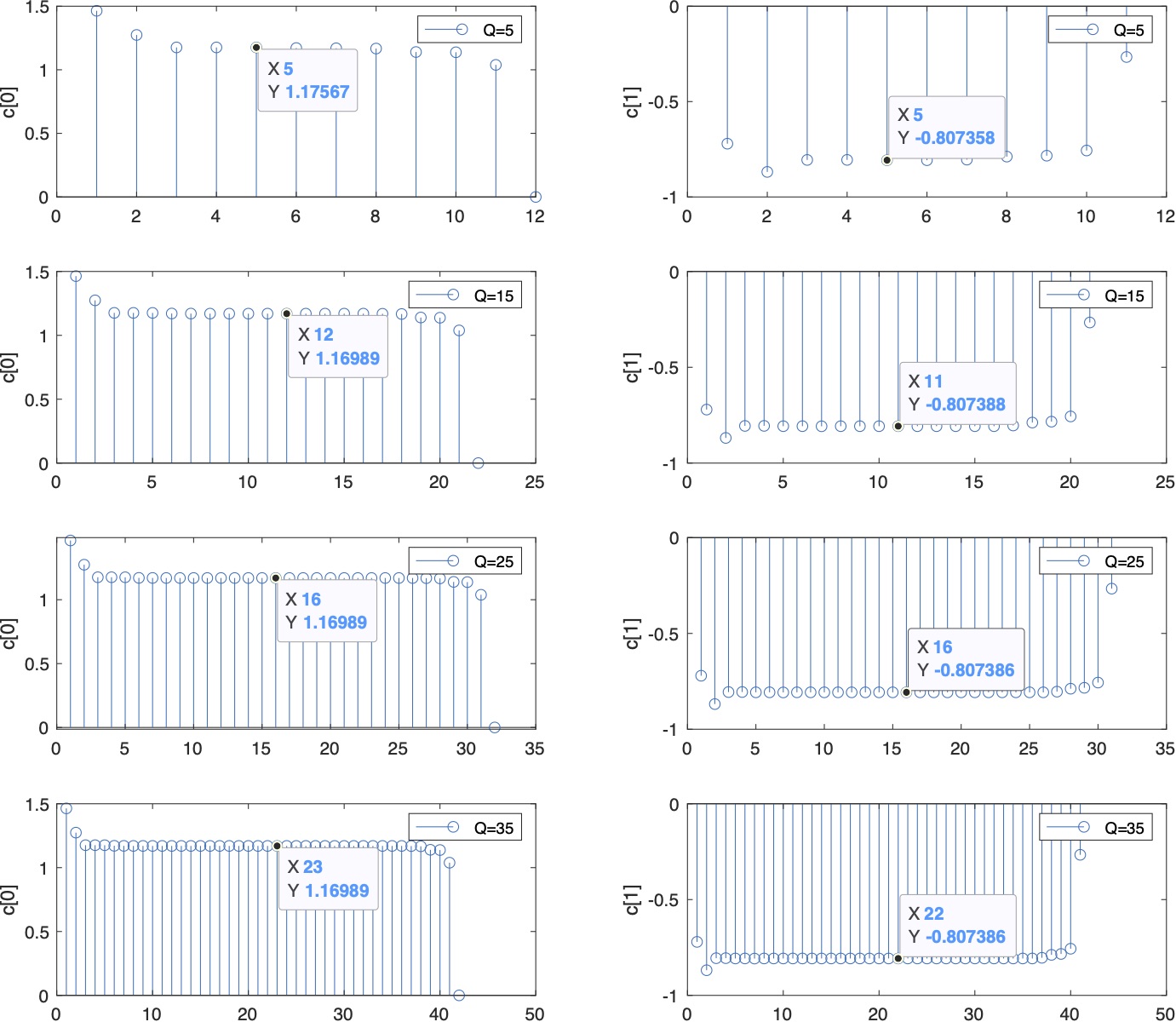}
\caption{A demonstration of symmetry point equilibrium as a function of $Q$.  }
\label{figure6}    
\end{figure}

 To demonstrate how such a matrix becomes locally Toeplitz as its size $Q$ is increased (while keeping the number of non-zero diagonals fixed),  Figure \ref{figure6} shows the main diagonal as well as the next diagonal of a matrix $\mathbf C$ obtained through Cholesky decomposition as a function of $Q$.  It is clear that as $Q$ becomes large,  rows near the symmetry point move toward an equilibrium state, with tap values converging to a fixed value and the matrix is locally Toeplitz.

A matrix that is locally Toeplitz can represent a \emph{time-invariant filter} provided it is operating on a vector that has \emph{only local support} near the symmetry point.   To make this statement clear, define a time invariant augmented FIR filter so that it has only local support, denoted as  $\mathbf h_\mathrm{aug}$ and defined as\footnote{The fact that the vector has time $n$ advancing from right to left is in anticipation of the fact that Cholesky factorisation implemented by most computing platforms provides a factorisation in the form $\mathbf U^T \mathbf U$ where $\mathbf  U$ is upper triangular}
\begin{equation}
\mathbf h_\mathrm{aug} = \{\underbrace{0,\cdots,0}_{\text{$Q$ zeros}},  h_{M-1},\cdots,h_1,h_0,\underbrace{0,\cdots,0}_{\text{$Q$ zeros}} \}^{\mathrm T}
\end{equation}
where $Q \gg M$.  As $\mathbf h_\mathrm{aug}$ has support only near the symmetry point,  a locally Toeplitz matrix where symmetry point equilibrium holds, performs a time-invariant convolution operation on $\mathbf h_\mathrm{aug}$,  even if the matrix is not strictly Toeplitz.

 \subsection{An augmented impulse response and the Gramian}
  
Define a Toeplitz matrix with $\left ( 2Q+M \right )$ columns and rows, given by 
\begin{eqnarray} \small  
\mathbf H = \left[ \begin{array}{lllllllllll}
h_0  & h_1 & \cdots & h_{M-1} & 0 & 0 & \cdots  & 0 \\
0 & h_0 & h_1 & \cdots & h_{M-1} & 0 & \cdots  & 0 \\
\vdots & \vdots  & \vdots & \vdots &  \vdots  & \vdots  & \vdots & \vdots \\
0 & \cdots & 0  & h_0 & h_1 & \cdots & \cdots  & 0 \\
0 & 0 & \cdots & 0  & h_0 & h_1 & \cdots   & 0 \\
\vdots & \vdots  & \vdots & \vdots &  \vdots  & \vdots  & \vdots & \vdots \\
0  & \cdots & \cdots & \cdots & \cdots & \cdots & 0  & h_0 \\
\end{array} \right].
 \end{eqnarray}
Now define  an augmented Kronecker delta as 
\begin{equation}
\mathbf \delta_\mathrm{aug} = \{\underbrace{0,\cdots,0}_{\text{$Q$ zeros}}, {\underbrace{0,\cdots,0}_{\text{${M-1}$ zeros}}},1,\underbrace{0,\cdots,0}_{\text{$Q$ zeros}} \}^{\mathrm T}
\end{equation}
then  it follows that 
\begin{equation} \label{augment_verg}
\mathbf h_{\mathrm{aug}} = \mathbf H~ \mathbf \delta_\mathrm{aug}.
\end{equation}
Define the Gramian matrix $\mathbf G$  as 
\begin{equation}
\mathbf G = \mathbf H^{\mathrm T} \, \mathbf H
\end{equation}
then it follows that 
\begin{equation}
\mathbf H^{\mathrm T} ~\mathbf h_{\mathrm{aug}} = \mathbf H^{\mathrm T} \, \mathbf H~ \mathbf \delta_\mathrm{aug} = \mathbf G ~\delta_\mathrm{aug}.
\end{equation}

$\mathbf G$ is symmetric, Toeplitz and Hermitian, and represents a linear phase FIR filter denoted as $\mathbf g$ \cite{linear_syst}.  The matrix $\mathbf G$ has causal non-zero diagonals (the main diagonal and upper triangular part), as well as non-zero diagonals that are anti-causal (the lower triangular part).  The dominant (centre) tap of $\mathbf g$  is represented by the main diagonal of $\mathbf G$.   

\subsection{An approximate minimum phase FIR $\mathbf c_{\mathrm{approx}}$}

This subsection derives an approximate minimum phase FIR $\mathbf c_{\mathrm{approx}}$, based on Cholesky decomposition of the Gramian.  The approximate filter $\mathbf c_{\mathrm{approx}}$ will be deployed as an initial guess  when the non-linear equations are solved through numerical optimization. 

On the time domain the linear phase filter $\mathbf g$ will be factorable, if and only if  its matrix representation $\mathbf G$ is  positive definite. Then  Cholesky factorization  can be performed to yield an upper triangular matrix $\mathbf C$ as \cite{linear_alg}
\begin{equation}
\mathbf G = \mathbf C^{\mathrm T} \mathbf C
\end{equation}
and it follows that  
\begin{equation}
\mathbf H^{\mathrm T} \mathbf h_{\mathrm{aug}} = \mathbf C^{\mathrm T} \mathbf C~ \mathbf \delta_\mathrm{aug}  ~\implies~ 
 \underbrace{\left [ (\mathbf C^{\mathrm T})^{-1} \mathbf H^{\mathrm T} \right ]}_{\text{Matrix~}\mathbf F}    \mathbf h_{\mathrm{aug}} = \mathbf C~ \mathbf \delta_\mathrm{aug}.
\end{equation}
 The matrices $\mathbf C$ and $\mathbf F$ are locally Toeplitz (if $Q$ is sufficiently large), and  since $\mathbf h_{\mathrm{aug}}$ and $ \mathbf \delta_\mathrm{aug}$ have only local support,  these matrices represent {time-invariant} filters.

 Thus for a sufficiently large value of $Q$ it follows that   
\begin{equation} \label{transf}
\mathbf{c_\mathrm{aug}} = \mathbf C~ \mathbf \delta_\mathrm{aug} =  \mathbf F \, \mathbf h_{\mathrm{aug}}
\end{equation}
where $\mathbf{c_\mathrm{aug}}$ is given by 
\begin{equation}
\mathbf c_\mathrm{aug} = \{\underbrace{0,\cdots,0}_{\text{$Q$ zeros}},c_{M-1},\cdots, c_1,c_0,\underbrace{0,\cdots,0}_{\text{$Q$ zeros}} \}^{\mathrm T}.
\end{equation}

Cholesky decomposition expresses a Hermitian positive definite matrix as the product of a minimum phase matrix and its match \cite{linear_alg}, regardless of the value of $Q$. Thus when symmetry point equilibrium holds, it follows that a minimum phase FIR filter can be recovered from $\mathbf c_{\mathrm{aug}}$ as  
\begin{equation}
\mathbf c_{\mathrm{approx}} = \{ c_0,c_1,c_2,\cdots, c_{M-1} \}^{\mathrm T}.
\end{equation}

The FIR $\mathbf c_{\mathrm{approx}}$ is approximate as $Q$ is finite, but it will be shown in  Section \ref{FIR_design} that $\mathbf c_{\mathrm{approx}}$ is remarkably accurate, even for moderate settings of $Q$.  Section \ref{FIR_design}   will  also demonstrate that $\mathbf c_{\mathrm{approx}}$ is an appropriate choice as an initial guess to perform numerical optimization of  the non-linear equations given by (\ref{nonlinear_system2}).  

In Subsection \ref{formal} it will be formally shown that if $Q\rightarrow \infty$ then $\mathbf c_{\mathrm{approx}} \rightarrow  \mathbf c$.

\subsection{The optimal setting for $\gamma$ }

The matrix $\mathbf G$  has real eigenvalues \cite{linear_alg},  and  as remarked above will be factorable based on Cholesky factorization if $\mathbf G$ is positive definite.  Thus in order to be factorable, the minimum eigenvalue of matrix $\mathbf G$ must satisfy
\begin{equation}
\lambda_{\mathrm{min}} > 0.
\end{equation}
Depending on the application $\mathbf G$ may well be positive definite, but for Chebyshev filter design $\mathbf G$ is not positive  definite, and the minimum eigenvalue will be negative. Hence it follows that the Gramian will not be factorable through Cholesky factorization.  To mitigate this problem and to guarantee that the Gramian is factorable, the main diagonal can be modified, with the modified matrix denoted  as $\mathbf G_{\mathrm{adj}}$ and   given by 
\begin{equation}
\mathbf G_{\mathrm{adj}} = \mathbf G + \gamma \mathbf I.
\end{equation}
 $\mathbf G_{\mathrm{adj}}$ is guaranteed to be factorable as 
 \begin{equation}
 \mathbf G_{\mathrm{adj}} = \mathbf C^{\mathrm T} \mathbf C 
 \end{equation} 
 if, and only if, $\gamma$ is chosen as 
\begin{equation} \label{gamma_cond}
\gamma > |{\lambda_{\mathrm{min}}}| 
\end{equation}
as then the smallest eigenvalue is finite and positive. This result is known  as regularization \cite{linear_alg,regular}.  

To demonstrate that regularization in the discrete-time domain is equivalent to lifting  in the $\Omega$ domain, the filter shown in Table \ref{voorbeeld} is considered again as an example.   For this case the minimum eigenvalue of the Gramian $\mathbf G$ is negative and $\mathbf g$ is not factorable.  Figure \ref{eiewaarde} shows the value of the minimum eigenvalue of $\mathbf G$ as a function of $Q$, and it is clear that for a large value of $Q$, $ |\lambda_{\mathrm{min}}|$ is converging towards the value of the stopband ripple peak (negative) as shown in Figure \ref{gap_explained}.

On the discrete time domain it is a requirement for $\mathbf g_\mathrm{adj}$ to be positive definite, and thus it is required that $\gamma > |\lambda_{\mathrm{min}}|$.  In the next subsection it will be proved that if $Q \rightarrow \infty$, Cholesky factorization solves the Orchard and Wilson nonlinear equations (\ref{nonlinear_system2}). And since it is known that in the limit $Q \rightarrow \infty$ the smallest (negative) eigenvalue of $\mathbf G$ is equivalent to the stopband negative ripple peak, the result given by (\ref{gamma_cond}) suggests that a positive definite linear phase filter $\mathbf g_\mathrm{adj}$ must be deployed and shown to be factorable. That is,   determine $\gamma$ by solving the nonlinear equations through computer optimization so that $E_{\mathrm{L_2}}$ is limited only by the machine resolution.  This will be the basis of the filter design presented in Section \ref{FIR_design}.

\begin{figure} []
\centering
  \includegraphics[width=0.65\textwidth]{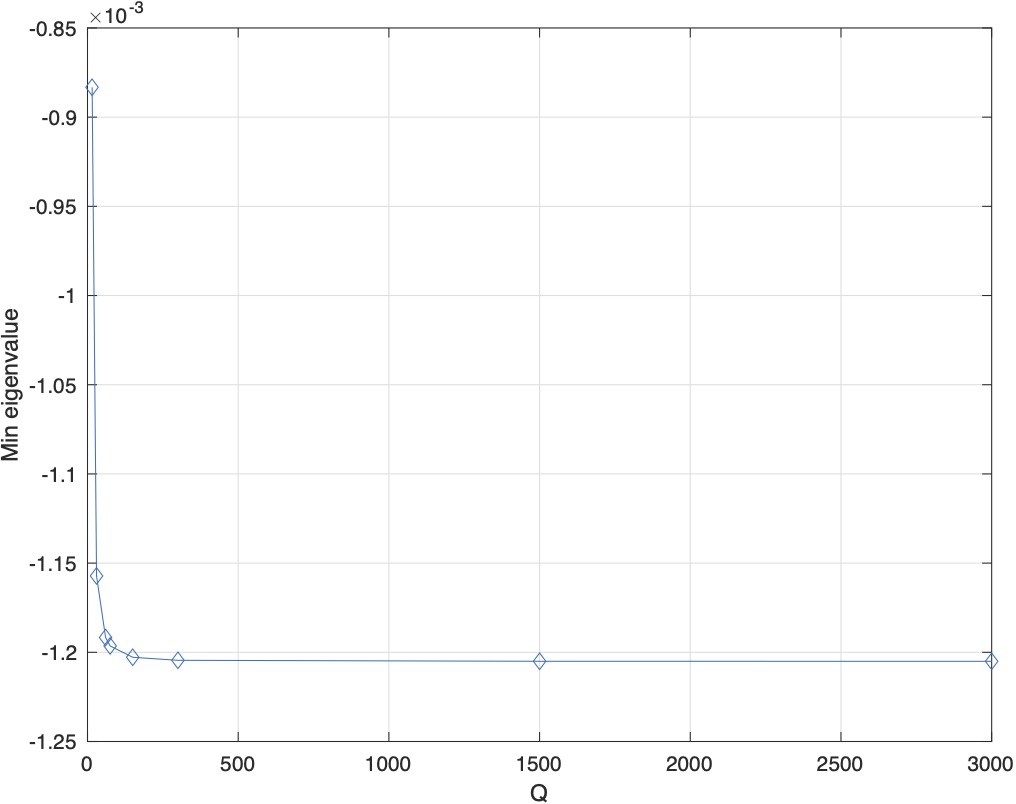}
\caption{The value of the minimum eigenvalue as a function of $Q$.  }
\label{eiewaarde}    
\end{figure}


 \subsection{$Z$ domain and time domain factorization are equivalent if $Q \rightarrow \infty$} \label{formal}

If $Q \rightarrow \infty$ symmetry point equilibrium will clearly hold,  and $\mathbf C$ is locally Toeplitz near the  symmetry point. The matrix $\mathbf G_{\mathrm{adj}}$ is factorized  as 
\begin{equation}
\mathbf G_{\mathrm{adj}} = \mathbf C^{\mathrm T} \mathbf C
\end{equation}
so that for any row near the symmetry point of $\mathbf C^{\mathrm T}$, say row $j$, multiplied with columns $j,j+1,j+2,\cdots$ of $\mathbf C$ yields  a system of non-linear equations given by 
\begin{eqnarray} \label{nonlinear_system_again}
\nonumber c_0^2 + c_1^2 + \cdots + c_{M-1}^2 &=& g_{\mathrm{adj}}\left [{M-1}\right ] \\ 
\nonumber c_0 c_1 + c_1 c_2 + \cdots + c_{M-2} c_{M-1} &=& g_{\mathrm{adj}}\left [{M-2}\right ] \\
c_0 c_2 + c_1 c_3 + \cdots + c_{M-3} c_{M-1} &=& g_{\mathrm{adj}}\left [{M-3}\right ] \\
\nonumber \vdots  &=&  \vdots \\
\nonumber c_0 c_{M-1} &=& g_{\mathrm{adj}}[0].
\end{eqnarray}
This system of non-linear equations is identical to (\ref{nonlinear_system2}), hence in the limit where $Q$ is infinite, time domain factorization is identical to spectral factorization.  It follows that the residual error $\mathbf e$ is given by 
\begin{equation}
    \mathbf e = \mathbf C^T \mathbf C - \mathbf G_{\mathrm{adj}}.
\end{equation}   
As Cholesky decomposition is computed on a digital computer to machine resolution, this result shows that the $\mathrm{L_2}$ norm of $\mathbf e$ will be limited to a lower limit consistent with machine precision.

In Appendix A it is proved that  $\mathbf F$ is unitary, and an all pass filter.  From equation (\ref{transf}) it is known that \begin{equation} 
\mathbf F    \mathbf h_{\mathrm{aug}} = \mathbf C~ \mathbf \delta_\mathrm{aug} \triangleq \mathbf c_{\mathrm{aug}}.
\end{equation}
$\mathbf F$ is an all pass filter, hence it follows that 
 \begin{equation}
|{C(\Omega)}| = |{H(\Omega)}|.	
 \end{equation}
In Section \ref{transform} the salient properties of the matrix $\mathbf F$ will be further explored. 

The results obtained can be summarized as follows.  Denote the stopband negative peak value of $G(\Omega)$ as $\Gamma$ and let $Q \rightarrow \infty$, then it follows that:
\begin{enumerate}
    \item Discrete time and frequency domain factorization are equivalent. This is a known result, see for example \cite{cioffi}. 
    \item The smallest eigenvalue $\lambda_{\mathrm{min}}$ of the Gramian matrix $\mathbf G$ equals  $\Gamma$.  
    \item If $\mathbf g_{\mathrm{adj}}$ is positive definite it can be factored, that is, (\ref{nonlinear_system2}) has a real solution with $E_{L_2} = 0$  if finite computer resolution is neglected.  
    \item A minimum phase filter $\mathbf c$ that follows from a factorization of a positive definite $\mathbf g_{\mathrm{adj}}$ is unique\footnote{As Cholesky decomposition of a positive definite Hermitian matrix is unique.} and optimal --- in the sense that no other real filter exists that will solve (\ref{nonlinear_system2}) with $E_{L_2}$ equal to machine resolution, and has $|{C(\Omega)}| = |{H(\Omega)}|$.
\end{enumerate}

The next section provides experimental  results to confirm the results derived in this section. 

\section{Numerical results: Optimal Chebyshev minimum phase FIR filter design}  \label{FIR_design} 

This section presents a detailed minimum phase FIR filter design based on the results presented in the previous sections. The performance of the filter will be compared to results provided in \cite{antonio}, for a filter that was designed to identical specifications. The difference between the design presented in this section and the design based on best practice \cite{antonio}, comes down to the design of the adjusted linear phase filter $\mathbf g_{\mathrm{adj}}$.   

In this paper a positive definite filter $\mathbf g_{\mathrm{adj}}$ will be the basis of the minimum phase filter design, and the residual error norm $E_{L_2}$ will be limited only by the 64 bit machine resolution.  It will be shown that $E_{L_2}$ is several orders of magnitude smaller than the residual error reported in \cite{antonio}, which is based on best practice design. This section will also demonstrate that $E_{L_2}$ plotted as a function of $\gamma$ exhibits a \emph{waterfall point}, where the filter $\mathbf g_{\mathrm{adj}}$ becomes positive definite.  This is indeed how the appropriate value for $\gamma$ can be determined for high order filters.

It should be noted that the designer must choose a value of $\gamma$ as close to the waterfall point as is possible, as this will prevent leakage in the stopband.  That is, the zeros of the filter $\mathbf c$ are as close to the unit circle as is possible, but with $\mathbf g_{\mathrm{adj}}$ positive definite and factorable as defined in Definition \ref{factorable}.

\subsection{Specifications for a Chebyshev minimum phase filter, and computing the linear phase filter $\mathbf g$}

The specifications for a lowpass minimum phase Chebyshev approximation are the starting point of the design, and defined as follows:
\begin{enumerate}
\item The desired passband ripple value $\delta_p $.  
\item The stopband ripple value $\delta_a$.  
\item The passband edge frequency as $\Omega_p$, in the range $[0,\pi]$.  
\item The stopband edge frequency  $\Omega_s$,  in the range $[0,\pi]$. 
\end{enumerate}
\begin{table}[t]
\scriptsize
\caption{Tap settings for $\mathbf g$} 
\centering 
\begin{tabular}{c c c c} 
\hline\hline 
Tap number $n$ & $\mathbf g$\\ [0.5ex] 
\hline 
$0$ & -0.00033409853951949 \\ 
1 & -0.002489549410806 \\
2 & -0.007656350824928  \\
3 & -0.011354989160955 \\
4 & -0.002981767473881 \\ 
5 &  0.018180581093311 \\
6 & 0.026333770707396  \\
7 & -0.008295888670961 \\
8 & -0.062043244763120 \\
9 & -0.047371546549295 \\
10&  0.095349066618093  \\
11 & 0.295504051520742 \\
12 & 0.391016383693520 \\
13 & 0.295504051520742 \\
14 & 0.095349066618093 \\
15 & -0.047371546549295 \\
16 & -0.062043244763120 \\
17 & -0.008295888670961 \\
18 & 0.026333770707396 \\
19 & 0.018180581093311  \\
20 & -0.002981767473881 \\
21 & -0.011354989160955\\
22 & -0.007656350824928\\
23 & -0.002489549410806\\
24 & -0.00033409853951949\\[1ex] 
\hline 
\end{tabular}
\label{tab1} 
\end{table}

The equivalent ripple parameters for the linear phase filter $\mathbf g$ are given by \cite{antonio}
\begin{eqnarray}
 \Delta_p &=& \frac{4 \delta_p}{2+2 \delta_p^2 -\delta_a^2}, ~~\mathrm{the~passband~ripple} \nonumber \\
 \Delta_a &=& \frac{\delta_a^2}{2+2 \delta_p^2-\delta_a^2}, ~~\mathrm{the~stopband~ripple}.
\end{eqnarray}
A linear phase Chebyshev filter $\mathbf g$  can be designed based on $\Delta_p, \Delta_a, \Omega_p$ and $\Omega_s$, which in practice is performed through an optimal Parks-McClellan algorithm \cite{parks}.   

As an example, tap settings for $\mathbf g$ obtained  with $\Omega_p = 0.3 \pi$, $\Omega_s = 0.6 \pi$, $\delta_p = 0.01$ and $\delta_s = 0.00316$ (these specifications were taken from example $1$ in \cite{antonio}) are shown in Table  \ref{tab1}. 

The frequency response and the pole/zero representation of the filter are shown in Figure \ref{g}.  It is clear that $\mathbf g$ cannot be factored --- and to confirm this, note that the minimum eigenvalue of the Gramian matrix is negative.  
\begin{figure} []
\centering
  \includegraphics[width=0.6\textwidth]{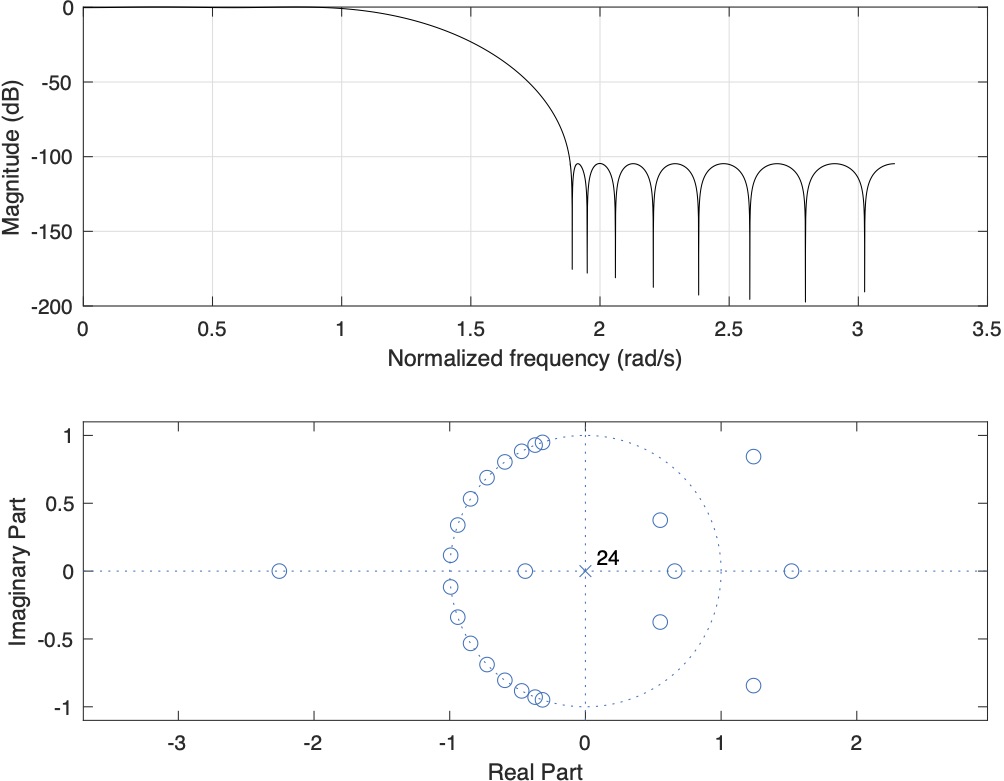}
\caption{A lowpass filter $\mathbf g$ to be factorized in order to compute the minimum phase filter $\mathbf c$.}
\label{g}    
\end{figure}

\subsection{Computing $\mathbf g_{\mathrm{adj}}$ that is factorable}

The next step is to mitigate the non-factorability of the linear phase filter $\mathbf g$ through setting $\gamma$ to an appropriate value and obtaining $\mathbf g_{\mathrm{adj}}$ that is factorable. 

In this case $\gamma_{\mathrm{psd}}$ for a positive semi-definite linear phase filter  can be measured by examining  the negative ripple peaks in $G(\Omega)$. The most negative peak yielded  
\begin{equation}
    \gamma_{\mathrm{psd}} = 5.832240436431935 \times 10^{-6}.
\end{equation}
The residual error $E_{L_2}$ can be computed by solving the non-linear equations (\ref{nonlinear_system2}) through computer optimization.  In this paper a $64$ bit version of the Levenberg-Marquardt optimizer available in MATLAB as \emph{lsqnonlin} were deployed, including the Hessian.  The initial guess for $\mathbf c$ was computed on the time domain using Cholesky decomposition, with $Q=10N$ where $N=25$ in this case (number of taps in $\mathbf g$). 

The residual error $E_{L_2}$ as a function of $\gamma$ is shown in Figure \ref{waterfall}, and the existence of a waterfall point where the error falls away is evident --- this is where the value  $\gamma_{\mathrm{psd}}$ (which is indicated in the figure as the "Ripple" value) is exceeded by $\epsilon \approx 10^{-13}$, rendering $\mathbf g_{\mathrm{adj}}$ factorable and positive definite. After the waterfall point, the residual error is limited by the digital machine resolution (and, as mentioned above, increasing $\gamma$ beyond this value would simply increase the stop-band leakage).  
\begin{figure} []
\centering
  \includegraphics[width=0.75\textwidth]{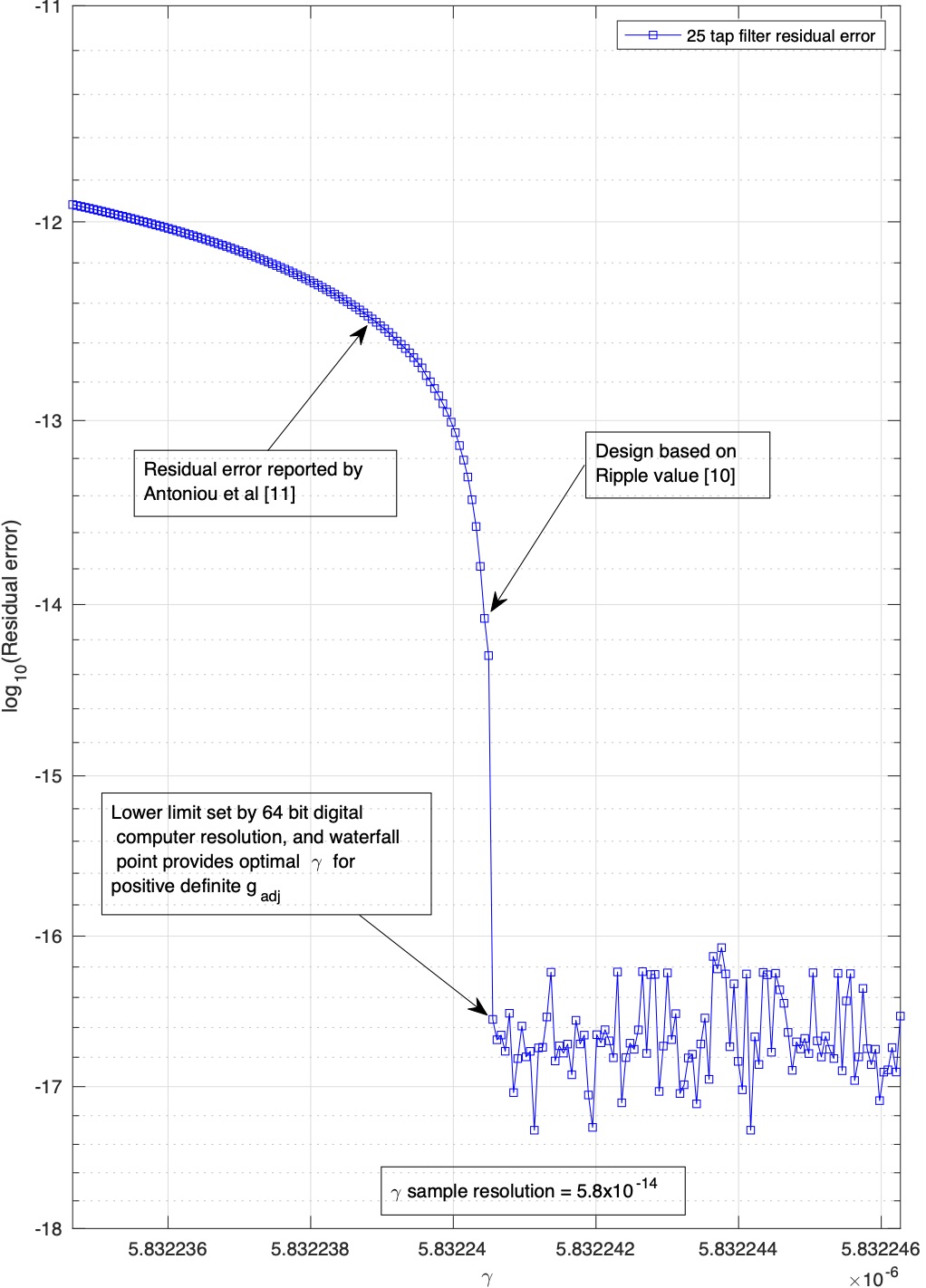}
\caption{The residual error $E_{L_2}$ as a function of $\gamma$.  Note the waterfall point, which occurs as soon as $\mathbf g_{\mathrm{aug}}$ becomes positive definite.}
\label{waterfall}
\end{figure}
Hence $\mathbf g_{\mathrm{aug}}$ can be computed by adding $\gamma$ to the dominant tap of $\mathbf g$, where $\gamma$ is given by 
\begin{equation}
    \gamma = \gamma_{\mathrm{psd}} + 1.16\times 10^{-13}. 
\end{equation}
Also note that the residual error reported by \cite{antonio} (see Figure \ref{waterfall}) is $4$ orders of magnitude above the $64$ bit machine resolution lower limit of $2.8 \times 10^{-17}$. It is unclear how $\gamma$ was measured in \cite{antonio}, but clearly it was short of the positive semi-definite value $\gamma_{\mathrm{psd}}$. 

Note that $\mathbf g_{\mathrm{aug}}$ must be scaled as lifting causes the entire spectrum to lift \cite{antonio}. 

\subsection{Computing the optimum minimum phase filter $\mathbf c$}

 With the optimal value of $\gamma$ and thus a positive definite and factorable $\mathbf g_{\mathrm{aug}}$ obtained as presented above, the optimum minimum phase filter $\mathbf c$ follows by solving the non-linear equations (\ref{nonlinear_system2}) through Levenberg-Marquardt (L-M) optimization.  The approximate taps based on Cholesky decomposition (deployed as an initial guess for the L-M optimization), as well as the final minimum phase filter taps are shown in Table \ref{tab2}, and the frequency response is shown in Figure \ref{optimized}.   
  \begin{figure} []
\centering
  \includegraphics[width=0.55\textwidth]{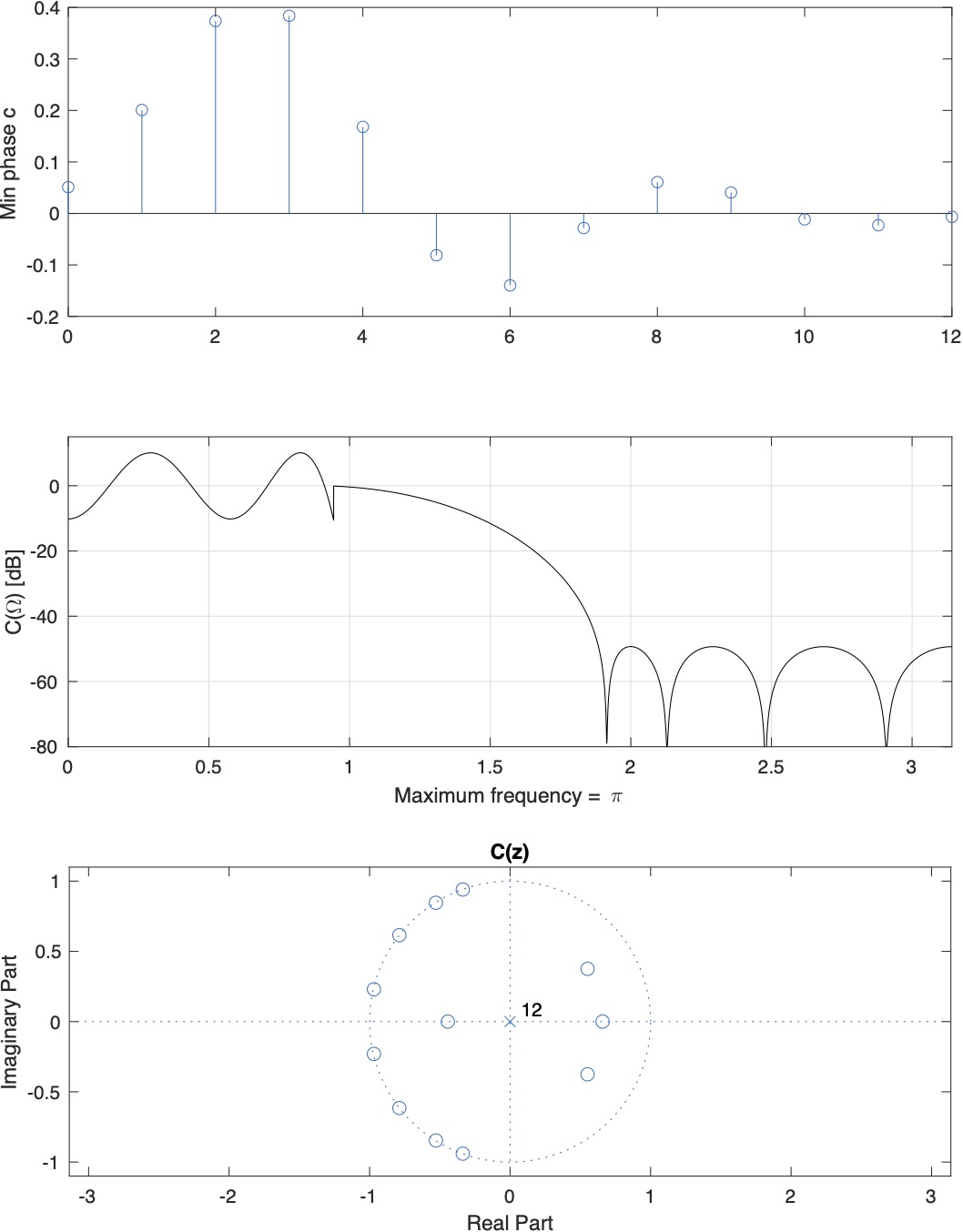}
\caption{The minimum phase FIR  $\mathbf c$  obtained through optimization, based on a positive definite minimum phase filter $\mathbf g_{\mathrm{aug}}$. Note that the magnitude response in the passband was multiplied by a factor of $100$ to aid visualization. }
\label{optimized}    
\end{figure}

The norm of the residual error is shown in Table \ref{L2} for a number of methods available in the literature (as reported in \cite{antonio}), and compared to the error for the proposed design based on a positive definite $\mathbf g_{\mathrm{adj}}$. The filter taps $\mathbf c$ proposed in \cite{antonio} differ in the fifth decimal place when compared to the optimal values shown in Table \ref{tab2}. In practice changes in the fifth decimal place of the taps of a FIR filter would be deemed significant, and thus a positive definite linear phase FIR filter design is an important objective for optimal minimum phase Chebyshev filters.  
\begin{table}[]
\scriptsize
\caption{Settings for the approximate $\mathbf c_{\mathrm{approx}}$ and the  optimal tap settings for $\mathbf c$.} 
\centering 
\begin{tabular}{c c c c} 
\hline\hline 
Tap number $n$ & $\mathbf c$ & $\mathbf c_{\mathrm{approx}} $ ($Q=10 \,N$) \\ [0.5ex] 
\hline 
0 & 0.051115654818476 &0.053170658603589 \\ 
1 & 0.200710190492617  & 0.206160489181734\\
2 & 0.373675662335455  & 0.378228787870903 \\
3 & 0.383763292854902 & 0.380981033331147\\
4 & 0.168110995957186 & 0.159498759291175\\ 
5 & -0.081211240469989 &  -0.086362886469131\\
6 & -0.139792118109026 & -0.137267954224048\\
7 & -0.028413408959431 & -0.024127759326649\\
8 & 0.060844647358445   & 0.061160436307306 \\
9 &   0.040660368538867  & 0.038606283413641 \\
10& -0.011537810825624 & -0.012290539783330 \\
11 & -0.023042542956496 & -0.022479612134730\\
12 & -0.006536978978309  & -0.006283513281458\\[1ex] 
\hline 
\end{tabular}
\label{tab2} 
\end{table}

 \begin{table}[]
\scriptsize
\caption{Residual error norm for various methods and the proposed positive definite design} 
\centering 
\begin{tabular}{c c c c c c } 
\hline\hline 
Proposed &  Factorization \cite{antonio} & Cepstrum & DHT & Orchard \cite{o-w} \\ [0.5ex] \hline 
 $2.84549 \times 10^{-17}$ &  $3.411 \times  10^{-13}$ & $2.394 \times 10^{-6}  $ & $ 2.826 \times 10^{-10} $  & $ 3.376 \times 10^{-10}  $   \\[1ex] 
\hline 
\end{tabular}
\label{L2} 
\end{table}

\section{Transforming an arbitrary phase system to a minimum phase system} \label{transform}

There are applications in practice where a FIR with arbitrary phase is known, but  required to be transformed so that it has a  minimum phase \cite{optics}.   In this section factorization is applied to perform the minimum phase transformation, and results are compared to a transformation based on the MMSE estimator.  The MMSE estimator is widely deployed in practice to perform this transformation  \cite{cioffi}. The system is assumed to operate in additive white Gaussian noise, thus a noise whitening filter need not be deployed.   
 
 \subsection{MMSE estimator deployed to perform the minimum phase transformation}
 
 A model valid for any system with a measured (or given) FIR $\mathbf h$ is given by 
 \begin{equation}
    y[n] = \sum_{k=0}^{M} h[k] s[n-k] + w_s[n]. 
 \end{equation}
 $\mathbf y$ represents an observed but noisy version of the system output in response to the system input $\mathbf s$.  The noise samples $w_s[n]$ are assumed to be drawn from an independently and identically distributed Gaussian random process with variance $\sigma^2$ and zero mean.  
 
 The transformation of $\mathbf h$ to a minimum phase $\mathbf c$ can be performed by a feedforward (anti-causal) filter $\mathbf f$.  Saltz \cite{saltz} proved that such a transformation is possible for the continuous and sampled domain \cite{cioffi}.  That is, the transformed system model is given by 
 \begin{eqnarray}
  \sum_{j=0}^{P} f[j] ~ y[n+j] = \sum_{k=0}^{M} c[k] ~ s[n-k] + w_f[n].
 \end{eqnarray}
The filter $\mathbf f$ has a number of taps $P+1$ that must be quite large relative to $M$.  The MMSE estimator computes filters $\mathbf f$ and $\mathbf c$ based on the orthogonality principle \cite{cioffi}. These ideas are rooted in the seminal work of Wiener and Kolmogorov, and the literature on this methodology is mature and complete \cite{poor,linear_est}.  In this section the MMSE estimator is applied to compute $\mathbf c$, and used as a benchmark to assess factorization applied to the transformation problem.

 \subsection{Numerical results}
 
 The example chosen  is a  random $10$ tap FIR, with a FIR shown in Figure \ref{cioffi_result} (top left).
\begin{figure} []
\centering
  \includegraphics[width=0.9\textwidth]{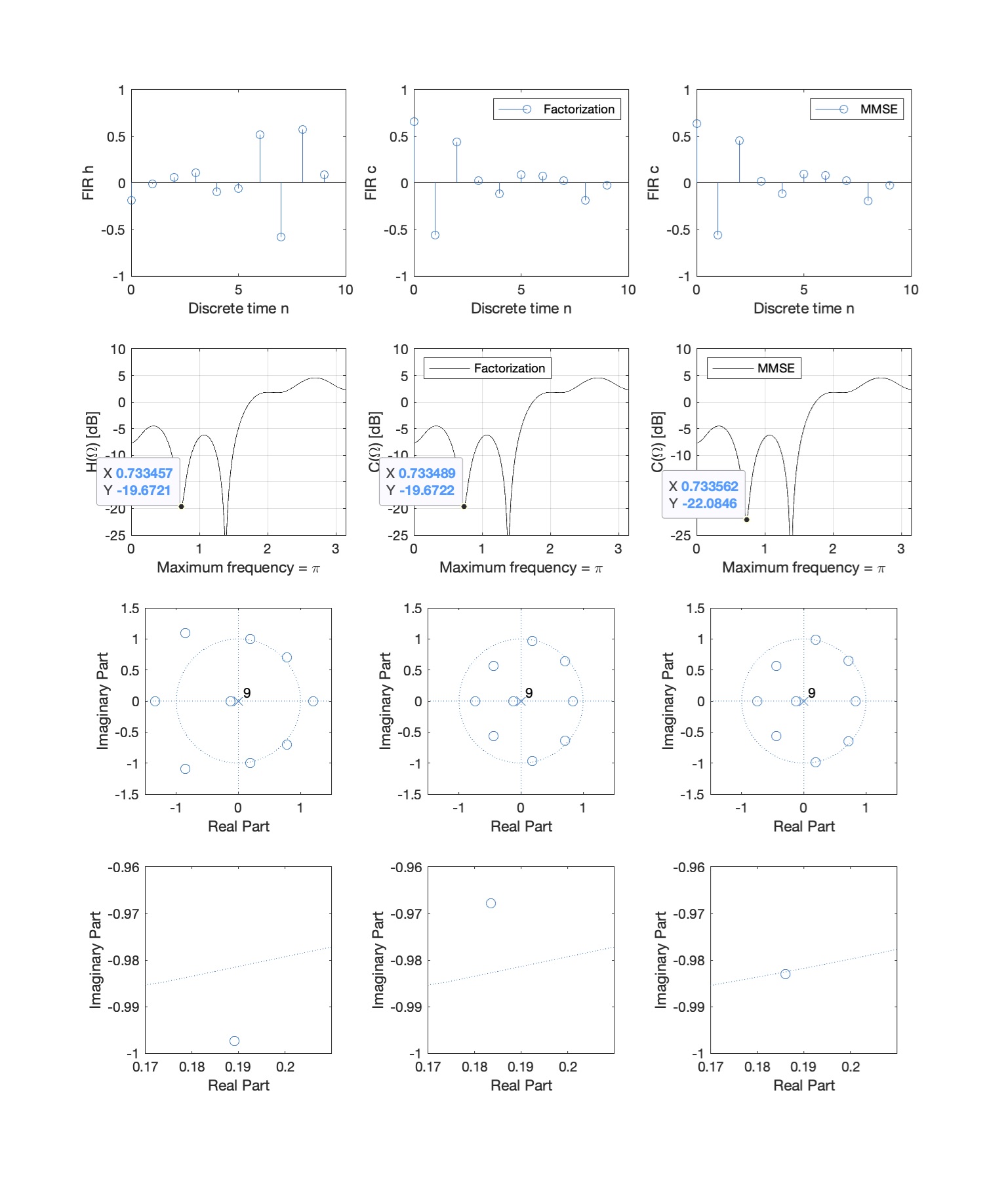}
\caption{A $10$ tap arbitrary phase FIR $\mathbf h$, and the minimum phase FIR $\mathbf{c}$ based on factorization and the MMSE method \cite{cioffi}.}
\label{cioffi_result}    
\end{figure}
Figure \ref{cioffi_result} shows the results for transformation based on factorization, as well as based on  the MMSE estimator. Note that the matched filter $\mathbf g$ on which factorization is based, is a linear phase FIR filter as shown in Figure \ref{cioffi2}, and is factorable.  Hence there is no need for lifting and thus $\gamma = 0$.  

\begin{figure} []
\centering
  \includegraphics[width=0.6\textwidth]{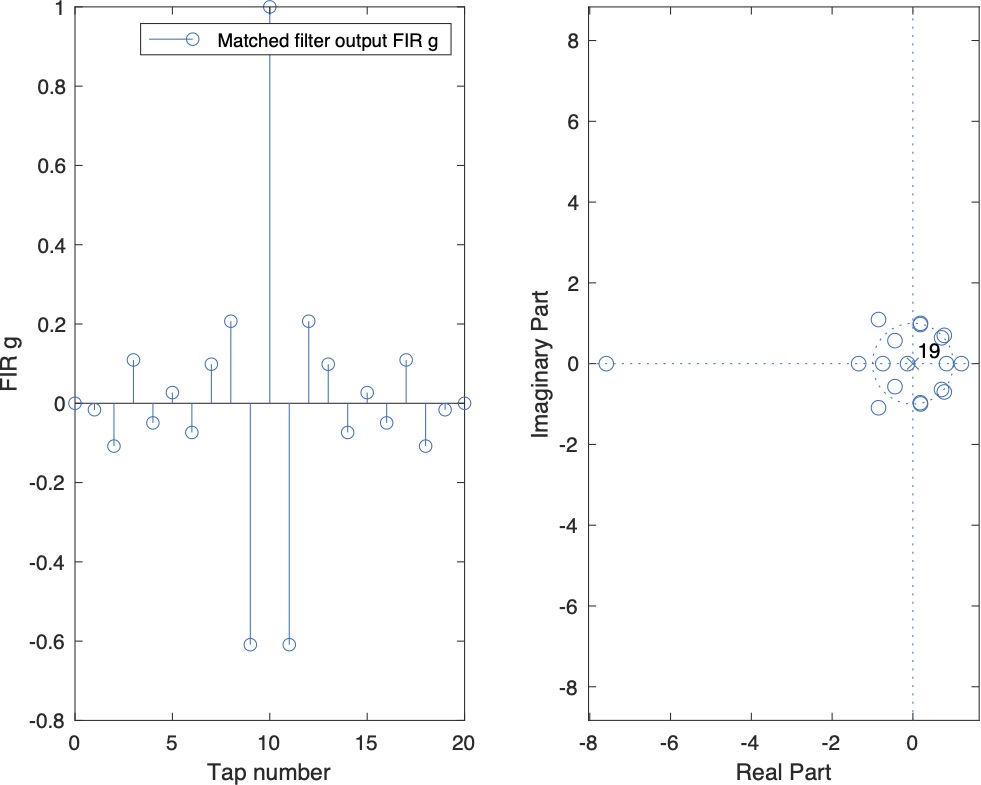}
\caption{The output of the matched filter denoted $\mathbf g$, which is a linear phase filter.}
\label{cioffi2}    
\end{figure}
The MMSE estimator yielded the filter  $\mathbf c$ as shown on the right in Figure \ref{cioffi_result}.   The frequency magnitude plot for the filter based on MMSE shows that there are frequencies where the magnitude response is not identical to that of $H(\Omega)$.  For the proposed transformation the magnitude response is close to $H(\Omega)$ as expected.  

What is concerning is that the MMSE estimator yielded a filter $\mathbf c$ that has a pair of zeros marginally outside the unit circle. In practice where feedback may be applied this would mean that the filter $\mathbf c$ is not unconditionally stable.  In contrast factorization yields the correct placement for this zero pair, as shown in Figure \ref{cioffi_result}.

The MMSE estimator does not provide a residual error that is anywhere near the machine resolution floor, as shown in Table \ref{mmse_table}.  Evidently an error on the order of $10^{-3}$ provides a filter $\mathbf c$ where $C(\Omega)$ is visibly sub-optimal as shown in Figure \ref{cioffi_result}. 
\begin{table}[]
\scriptsize
\caption{Residual error norm for MMSE and the proposed method} 
\centering 
\begin{tabular}{c c c c c c } 
\hline\hline 
Proposed & MMSE \cite{cioffi}  \\ [0.5ex] \hline 
 $6.11 \times 10^{-17}$ &  $2.07 \times  10^{-3}$    \\[1ex] 
\hline 
\end{tabular}
\label{mmse_table} 
\end{table}

It is interesting to consider how factorization is able to obtain these results. Factorization yielded small but critical changes in the filter $\mathbf c$ when compared to MMSE, and these changes were able to guarantee that $\mathbf c$ is minimum phase.  The difference comes down to the feedforward (anti-casual) filter $\mathbf f$, as shown in Figure \ref{prefilter}.  For factorization, the feedforward filter is represented by the matrix $\mathbf F$, which was demonstrated to be unitary and an all pass filter.  Factorization computes the  matrix $\mathbf F$ based on Cholesky decomposition, not based on the orthogonality principle as is the case for the MMSE estimator.  The rows of matrix $\mathbf F$ (near the symmetry point where it is locally Toeplitz) contain $\mathbf f$, and exhibits enhanced accuracy due to the Cholesky decomposition. 

Finally, note that matrix $\mathbf F$ is lower triangular\footnote{Because the definition of the direction of time $n$ was reversed to conform to the upper triangular $\mathbf C$, a lower triangular matrix is anti-causal in this paper. Most Cholesky decomposition functions in commercial software yield an upper triangular matrix, for historical reasons.}, confirming the results in \cite{saltz}, that is, the transformation filter is purely anti-causal. 

\begin{figure} [h]
\centering
  \includegraphics[width=0.6\textwidth]{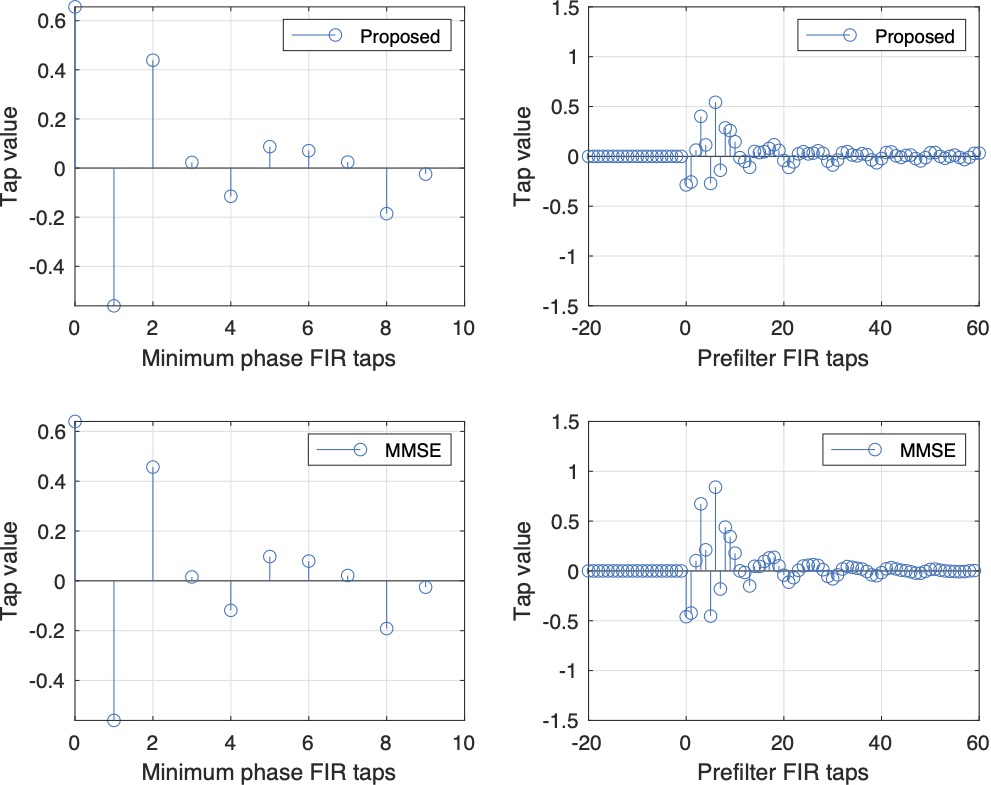}
\caption{The anti-causal transformation filter $\mathbf f$ able to transform the given FIR $\mathbf h$ to a minimum phase filter $\mathbf c$.}
\label{prefilter}    
\end{figure}

\section{Conclusions} \label{conclude}

The paper demonstrated that if the adjustment (or lifting) factor $\gamma$ is set to the ripple peak, as proposed in \cite{herrmann}, it does not produce a factorable equivalent for a linear-phase Chebyshev filter, and is in practice insufficient for that goal. The paper demonstrated that the definition of what constitutes a factorable linear phase filter has to be modified:  it was proposed that the non-linear equations proposed in  \cite{o-w} be exactly solvable (up to machine precision error) by a \emph{real} minimum phase filter, before a linear phase filter is deemed factorable.  

The paper then showed that this definition of factorability demands that the linear phase filter be positive definite, not positive semi-definite as proposed by \cite{herrmann}. 
By performing a time-domain factorization of the FIR system, it was shown that the change in the definition of a factorable linear phase filter is correct: the time domain was  able to pinpoint the exact adjustment required to ensure a solvable system --- the Gramian represents the linear phase filter and must be positive definite. 

The consequences of a change from positive semi-definite to positive definite is dramatic, and when the residual error is plotted as a function of the lifting factor $\gamma$, it exhibits a waterfall point coinciding with the point where $\gamma$ guarantees factorability. At that point the residual error falls away to an error floor set by the finite resolution of the digital computer.  

Time-domain factorization was shown to be possible if an augmentation of the impulse response is defined, and becomes exact as the expansion approaches infinity. However, even for relatively small expansion factors useful approximations for the adjustment value and the filter coefficients are obtained. This served to provide a good initial starting point for optimization of the non-linear equations proposed in \cite{o-w}. 

This process is useful for any minimum-phase FIR filter design and/or transformation, and the approximate time domain solution for the filter coefficients produces rapid and accurate convergence of subsequent iterative methods. The improvements in accuracy beyond those achieved in \cite{antonio} were reflected in changes to the fifth decimal place of the minimum phase filter taps.   But apart from this improvement which may not be  of practical importance for Chebyshev filter design (depending on the application), the proposed methods also have the benefit of generality and theoretical clarity.

\bibliography{mybibfile}

\newpage

\section*{Appendix A} \label{appendix1}

\newtheorem{thm}{Theorem}

\begin{thm} Transformation matrix $\mathbf F$ is unitary \end{thm}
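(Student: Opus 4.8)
The plan is to verify the defining identity of a unitary matrix directly from the construction of $\mathbf F$, using the fact that the Gramian splits as a Cholesky product. First I would record the ingredients already available: $\mathbf H$ is, by construction, the square $(2Q+M)\times(2Q+M)$ Toeplitz convolution matrix; $\mathbf G=\mathbf H^{\mathrm T}\mathbf H$ is its Gramian; in the factorable case (after regularization if necessary) $\mathbf G=\mathbf C^{\mathrm T}\mathbf C$ with $\mathbf C$ upper triangular and invertible; and $\mathbf F=(\mathbf C^{\mathrm T})^{-1}\mathbf H^{\mathrm T}$. Since $\mathbf C^{\mathrm T}$ is invertible and of the same order as $\mathbf H^{\mathrm T}$, the matrix $\mathbf F$ is itself square of order $2Q+M$. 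Everything in sight is real, so ``unitary'' here coincides with ``orthogonal,'' and it is enough to establish $\mathbf F\mathbf F^{\mathrm T}=\mathbf I$.

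The core step is a one-line computation:
\[
\mathbf F\,\mathbf F^{\mathrm T}\;=\;(\mathbf C^{\mathrm T})^{-1}\,\mathbf H^{\mathrm T}\,\mathbf H\,\mathbf C^{-1}\;=\;(\mathbf C^{\mathrm T})^{-1}\,\mathbf G\,\mathbf C^{-1}\;=\;(\mathbf C^{\mathrm T})^{-1}\,\mathbf C^{\mathrm T}\,\mathbf C\,\mathbf C^{-1}\;=\;\mathbf I .
\]
Hence the rows of $\mathbf F$ are orthonormal. For a square real matrix, $\mathbf F\mathbf F^{\mathrm T}=\mathbf I$ forces $(\det\mathbf F)^2=1$, so $\mathbf F$ is invertible with $\mathbf F^{-1}=\mathbf F^{\mathrm T}$, and therefore also $\mathbf F^{\mathrm T}\mathbf F=\mathbf I$. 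Thus $\mathbf F^{\ast}=\mathbf F^{\mathrm T}=\mathbf F^{-1}$, which is precisely the statement that $\mathbf F$ is unitary. As an immediate corollary $\mathbf F$ preserves the $\ell_2$ norm, i.e. $\|\mathbf F\mathbf x\|=\|\mathbf x\|$ for every $\mathbf x$; applied to the relation $\mathbf F\,\mathbf h_{\mathrm{aug}}=\mathbf c_{\mathrm{aug}}$ and combined with the locally-Toeplitz (symmetry-point-equilibrium) structure, this yields $|F(\Omega)|=1$, so that $\mathbf F$ is all-pass and $|C(\Omega)|=|H(\Omega)|$, as asserted in the main text.

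I do not expect the algebra to be the obstacle — it is immediate once the Cholesky splitting is in hand. The one point that needs care is the status of that splitting. If $\mathbf G$ is only positive semi-definite, its Cholesky factor is singular (or fails to exist) and $\mathbf F$ is not even well defined, which is exactly what forces the regularization $\mathbf G_{\mathrm{adj}}=\mathbf G+\gamma\mathbf I$. Repeating the computation with the Cholesky factor of $\mathbf G_{\mathrm{adj}}$ instead of $\mathbf G$ gives $\mathbf F\mathbf F^{\mathrm T}=\mathbf I-\gamma(\mathbf C\mathbf C^{\mathrm T})^{-1}$, so strict unitarity holds only when $\gamma=0$ — the natively factorable situation of Section~\ref{transform} — or asymptotically as $Q\rightarrow\infty$, where the perturbing eigenvalue is negligible relative to the spectrum of $\mathbf G$. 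I would therefore carry out the proof under the standing assumption that $\mathbf g$ is factorable in the sense of Definition~\ref{factorable}, so that $\mathbf G=\mathbf C^{\mathrm T}\mathbf C$ exactly, and simply note the approximate (near-unitary) version in the regularized case.
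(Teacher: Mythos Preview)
Your proof is correct and follows exactly the same route as the paper: compute $\mathbf F\mathbf F^{\mathrm T}=(\mathbf C^{\mathrm T})^{-1}\mathbf H^{\mathrm T}\mathbf H\,\mathbf C^{-1}$, substitute $\mathbf H^{\mathrm T}\mathbf H=\mathbf C^{\mathrm T}\mathbf C$, and obtain the identity. Your additional remarks on squareness, the equivalence $\mathbf F^{-1}=\mathbf F^{\mathrm T}$, and the status of the argument under regularization are sound elaborations but not part of the paper's own proof.
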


\begin{proof}
The proof starts by  computing 
\begin{equation}
\mathbf{F F^{\mathrm T} = \left ( \left ( C^{\mathrm T} \right ) ^{-1}  H^{\mathrm T} \right ) \left ( \left ( C^{\mathrm T} \right ) ^{-1} H^{\mathrm T} \right )^{\mathrm T}}
\end{equation}
and then demonstrates that this operation yields an identity matrix $\mathbf I$.   Proceeding in a step by step manner, it follows that 
 \begin{eqnarray} 
\mathbf{ \left ( \left ( C^{\mathrm T} \right ) ^{-1} H^{\mathrm T} \right ) \left ( \left ( C^{\mathrm T} \right ) ^{-1} H^{\mathrm T} \right ) ^{\mathrm T}}  
 \noindent & \nonumber\\ 
 = \mathbf{ \left ( \left ( C^{\mathrm T} \right ) ^{-1} H^{\mathrm T} \right ) \left (  \left ( H^{\mathrm T} \right ) ^{\mathrm T}  \left [ \left ( C^{\mathrm T} \right ) ^{-1} \right ] ^{\mathrm T}     \right )} \nonumber & \\
 = \mathbf{ \left ( \left ( C^{\mathrm T} \right ) ^{-1} H^{\mathrm T} \right ) \left (   H   \left [ \left ( C^{\mathrm T} \right ) ^{-1} \right ] ^{\mathrm T}     \right ) }\nonumber & \\
 = \mathbf{ \left ( \left ( C^{\mathrm T} \right ) ^{-1} H^{\mathrm T} \right ) \left ( H  C  ^{-1}     \right ) }\nonumber & \\ 
 =  \mathbf{  (C^{\mathrm T})^{-1} H^{\mathrm T}     H C^{-1}}. &
 \end{eqnarray}
But $\mathbf C$  was defined through factorization hence 
\begin{equation}
\mathbf{ H^{\mathrm T} H = C^{\mathrm T} C}.
\end{equation}
Substitute this expression into the previous equation and obtain
\begin{eqnarray}
\mathbf{ F F^{\mathrm T} = \left ( \left ( C^{\mathrm T} \right ) ^{-1} H^{\mathrm T} \right ) \left ( \left ( C^{\mathrm T} \right ) ^{-1} H^{\mathrm T} \right ) ^{\mathrm T}} &=& \mathbf{  (C^{\mathrm T})^{-1} H^{\mathrm T}  H    C  ^{-1}} \nonumber \\
&=& \mathbf{  (C^{\mathrm T})^{-1} C^{\mathrm T}   C   C^{-1}} \nonumber \\
&=& \mathbf I.
\end{eqnarray}
  Hence  the transformation  matrix $\mathbf F$ is unitary.
\end{proof}


\begin{thm}  \label{all_pass} {$\mathbf F$ is an all pass filter. } \end{thm}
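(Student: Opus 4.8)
The plan is to derive the all-pass property directly from the unitarity established in Theorem 1, combined with the locally Toeplitz structure of $\mathbf F$ noted in Section \ref{formal}. Recall that an all-pass filter is characterised by $|F(\Omega)| = 1$ for every normalised frequency $\Omega$, or equivalently by the autocorrelation of its impulse response being a Kronecker delta. So the goal is to read that autocorrelation identity off the matrix equation $\mathbf F \mathbf F^{\mathrm T} = \mathbf I$.

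First I would use the symmetry point equilibrium property: for $Q$ large the rows of $\mathbf F$ near the symmetry point are mutual shifts of a single finite sequence $\mathbf f = \{ f[0], f[1], \cdots \}$, which is precisely the impulse response the matrix $\mathbf F$ realises as a time-invariant filter. Hence for two rows $j$ and $k$ near the symmetry point, the inner product forming the $(j,k)$ entry of $\mathbf F \mathbf F^{\mathrm T}$ equals $\sum_{n} f[n]\, f\!\left[ n + (j-k) \right]$, i.e. the autocorrelation $r_f[\, j - k \,]$ of the impulse response.

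Next, Theorem 1 gives $\mathbf F \mathbf F^{\mathrm T} = \mathbf I$ exactly, for every $Q$. Evaluating its entries near the symmetry point therefore forces $r_f[\ell] = \delta[\ell]$ for every lag $\ell$ resolved as $Q \rightarrow \infty$. Taking the discrete-time Fourier transform of this identity yields
\begin{equation}
|F(\Omega)|^2 = \sum_{\ell} r_f[\ell]\, e^{-i \Omega \ell} = 1,
\end{equation}
so $|F(\Omega)| = 1$ for all $\Omega$, which is the definition of an all-pass filter. Combining this with (\ref{transf}) immediately recovers $|C(\Omega)| = |H(\Omega)|$, as already asserted in the main text.

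The main obstacle I anticipate is making the locally Toeplitz step rigorous: $\mathbf F$ is not strictly Toeplitz, so the autocorrelation identity can only be extracted cleanly from rows sufficiently far from the top and bottom of $\mathbf F$, and the argument is exact only in the limit $Q \rightarrow \infty$ (for finite $Q$ the boundary rows contaminate a bounded number of entries). I would handle this by appealing to the symmetry point equilibrium result established earlier --- as $Q \rightarrow \infty$ the central rows converge to exact shifts of $\mathbf f$ and every fixed lag $\ell$ is eventually captured --- rather than attempting a finite-$Q$ error bound, which is not required for the statement.
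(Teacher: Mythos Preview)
Your proposal is correct and reaches the same conclusion, but by a different route than the paper. The paper's proof is stochastic: it passes a white-noise vector $\mathbf n$ with covariance $\mathbf I$ through $\mathbf F$, computes the output covariance as $\mathbf F\,E\{\mathbf n\mathbf n^{\mathrm T}\}\,\mathbf F^{\mathrm T}=\mathbf F\mathbf F^{\mathrm T}=\mathbf I$ by Theorem~1, and concludes that since $\mathbf F$ maps white noise to white noise it must be all-pass. Your argument is deterministic: you read the autocorrelation $r_f[\ell]$ of the impulse response directly off the central rows of the identity $\mathbf F\mathbf F^{\mathrm T}=\mathbf I$ via the locally Toeplitz structure, then take the discrete-time Fourier transform to obtain $|F(\Omega)|=1$. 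Both arguments rest on the same unitarity relation; yours is more explicit about how one passes from the matrix identity to a statement about a scalar frequency response (and hence about where the $Q\rightarrow\infty$ limit is actually needed), whereas the paper's white-noise argument is shorter and simply invokes the standard signal-processing characterisation of all-pass systems without unpacking the Toeplitz structure.
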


\begin{proof}
Denote a random white noise sequence as a column vector $\mathbf n$, with a noise covariance matrix given by  $\mathbf I$ the identity matrix.  Write the noise covariance matrix of filtered noise $\mathbf{n_f} = \mathbf{ F n}$  as 
 \begin{equation}
 {\mathbf \Sigma}  = E\{ \mathbf{ n_f n_f}^{\mathrm T} \}.
 \end{equation}
   It follows that
\begin{equation}
 \mathbf{ \Sigma} =  E\{ \mathbf{ F n  (F n)}^{\mathrm T} \}  = E \{  \mathbf{ F n n^\mathrm{T} F^\mathrm{T}}\}  = {\mathbf F}  \, E \{ \mathbf{ n n^\mathrm{T}\} \, F}^{\mathrm T} .  
 \end{equation}  
 But since $\mathbf n$ is white noise, $\mathbf{ E \{ n n^\mathrm{T}\} = I}$, and hence 
 \begin{equation}
 \mathbf{ \Sigma  = F F^\mathrm{T}}. 
 \end{equation}  
 However  $\mathbf F$ is unitary,  thus it follows that   
 \begin{equation}
{\mathbf \Sigma} = E\{ \mathbf{n_f {n_f}^\mathrm{T}\} = F F^\mathrm{T}  = I}. 
 \end{equation}  
Hence the  filtered noise $\mathbf{n_f = F n}$ has a white power spectral density, and $\mathbf F$ is an all-pass filter. 
\end{proof}

\end{document}